\theoremstyle{plain}
\newtheorem{thm}{Theorem}
\newtheorem{lemma}{Lemma}
\newtheorem{prop}{Proposition}
\theoremstyle{definition}
\newtheorem*{defn}{Definition}
\newcommand{\zz}{\mathbb{Z}}
\newcommand{\calb}{\mathcal{B}}
\newcommand{\calm}{\mathcal{M}}
\newcommand{\cals}{\mathcal{S}}
\newcommand{\calt}{\mathcal{T}}
\newcommand{\diag}{\operatorname{diag}}
\renewcommand{\flat}{\operatorname{Flat}}
\title{When Do Phylogenetic Mixture Models Mimic Other Phylogenetic Models?}
\begin{document}
\bibliographystyle{sysbio}

\begin{center}{\LARGE When Do Phylogenetic Mixture Models Mimic Other Phylogenetic Models?}\end{center} 

\hspace{0.2in}

\begin{center}
{Elizabeth S. Allman$^{1}$, John A. Rhodes$^{1}$, Seth Sullivant$^{2}$\\

\hspace{0.2in}

\normalsize{$^{1}$\itshape{Department of Mathematics and Statistics, University of Alaska Fairbanks, Box 756660, Fairbanks, AK, 99775}}\\
\normalsize{$^{2}$\itshape{Department of Mathematics, North Carolina State University, Box 8205
     Raleigh, NC 27695 }}\\

}
\end{center}

\vspace{0.8in}

\makeatletter
\renewcommand{\section}{\@startsection
{section}%
{1}%
{0mm}%
{-\baselineskip}%
{\baselineskip}%
{\normalfont\scshape\centering}}

\makeatletter
\renewcommand{\subsection}{\@startsection
{subsection}%
{1}%
{0mm}%
{-\baselineskip}%
{\baselineskip}%
{\normalfont\itshape\centering}}

\makeatletter
\renewcommand{\subsubsection}{\@startsection
{subsubsection}%
{1}%
{0mm}%
{-\baselineskip}%
{-1.0\baselineskip}%
{\normalfont\itshape}}

{\bfseries 
Abstract--}
Phylogenetic mixture models, in which the sites in sequences undergo different substitution processes along the same or different trees, allow the description of heterogeneous evolutionary processes. As data sets consisting of longer sequences become available, it is important to understand such models, for both theoretical insights and use in statistical analyses.  Some recent articles have highlighted disturbing ``mimicking'' behavior in which a distribution from a mixture model is identical to one arising on a different tree or trees. Other works have indicated such problems are unlikely to occur in practice, as they require very special parameter choices.

After surveying some of these works on mixture models, we give several new results.  In general, if the number of components in a generating mixture is not too large and we disallow zero or infinite branch lengths, then it cannot mimic the behavior of a non-mixture on a different tree. On the other hand, if the mixture model is locally over-parameterized, it is possible for a phylogenetic mixture model to mimic distributions of another tree model. Though theoretical questions remain, these sorts of results can serve as a guide to when the use of mixture models in either ML or Bayesian frameworks is likely to lead to statistically consistent inference, and when mimicking due to heterogeneity should be considered a realistic possibility.

\ 

\noindent
Keywords: Phylogenetic mixture models, parameter identifiability,  heterogeneous sequence evolution
\newpage

\section*{}

As phylogenetic models have developed, there has been a trend toward allowing increasing heterogeneity of the evolutionary processes from site to site. For instance, the standard general time-reversible model (GTR) is now usually augmented by across-site rate variation, and the inclusion of invariable sites. Recently interest has expanded to more general mixture models, in which processes vary more widely. Much of this work has focused on developing models that might be useful for data analysis, and has therefore involved gaining practical experience with inference from data sets, and investigating theoretical questions of parameter identifiability, which is necessary for establishing that inference is  statistically consistent.

Among the results emerging from theoretical considerations, however, has been the construction of some explicit examples of mixture models  on one tree that `mimic' standard models on another tree, for certain parameter choices 
\citep{Matsen2007,StefVig2007,StefVig2007b}. While it should not be surprising that a highly heterogenous processes could produce data indistinguishable from a homogeneous process on a different tree, the simplicity of these examples, and the limited heterogeneity they require, is perhaps more worrisome. If such examples were widespread, then there would be severe theoretical limits on our ability to detect when a heterogeneous process is acting. Moreover, heterogeneous processes on one tree might routinely mislead us into thinking data arose on a different tree. We have encountered researchers who, not surprisingly, find this possibility alarming.

In discussing mixture models, it is useful to distinguish between \emph{single-tree mixture models}, in which all sites evolve along the same topological tree but perhaps with different branch lengths, rate matrices, and base distributions, and \emph{multitree mixture models}, in which sites may evolve along different topological trees (as is appropriate when recombination, hybridization, or lateral gene transfer, occurs). Though
the explicit examples mentioned above are single-tree mixtures, mimicking by multitree mixtures is of course also a possibility.

In this work we investigate the possibility of mimicking, with the intent of understanding its origin and whether it should be a major concern. Because the question of whether mimicking occurs is closely related to the question of identifiability of parameters for mixture models, we begin with a review of the literature addressing the latter. 
Next we establish that a limited amount of heterogeneity in a single-tree mixture cannot mimic evolution on a different tree in most relevant circumstances.
We show how known examples of non-identifiability of trees due to mixture processes  arise from a readily understood issue of \emph{local over-parameterization}. 
Finally, for certain group-based models (Jukes-Cantor and Kimura $2$-parameter) we also obtain results indicating that if mimicking does occur for  multitree mixtures, then it is not entirely misleading. In the case of fully-resolved trees, any mimicking distribution can only agree with a distribution coming from one of the topological trees appearing in the mixture.

\subsection*{Mixture models and identifiability}

Model-based phylogenetic inference from sequence data requires compromises between simplicity and biological realism.  Typical current modeling assumptions include that all sites evolve on a single tree, according to the same substitution process, often with a simple $\Gamma$-distributed scaling of rates across the sites. While one can easily formulate models allowing  more complexity, the additional parameters this introduces can be problematic. Not only is software likely to require longer run-times, but one also risks
`overfitting' of finite data sets and thus interpreting stochastic variation as meaningful signal.

As larger data sets become more common, one might be less concerned with the threat of overfitting, and thus attracted to the use of more complex models. However, there are theoretical problems which can also prevent a complex model from being useful for inference, no matter how much data one has. If two or more distinct values of some parameter ---
the topological tree relating the taxa, for instance --- can lead to exactly the same expectations of data, then that parameter fails to be \emph{identifiable}. Without identifiability, even given access to unlimited data generated exactly according to the model, no method of inference will be able to dependably determine the true parameter value. In contrast, if a parameter is identifiable, then under very mild additional assumptions, the standard frameworks of maximum likelihood and Bayesian inference can be shown to be \emph{statistically consistent}. That is, assuming again that the model faithfully describes the data generation process, as the size of a data set is increased, the probability of these methods leading to an accurate estimate of the parameter approaches 1.

Of course the notion of statistical consistency says nothing about how statistical inference will behave when the process generating the data is not captured fully by the model chosen to analyze it (i.e., when the model used in the analysis is \emph{misspecified}). Nonetheless, consistency is generally viewed as a basic prerequisite for choosing an inference method, since without it a method is not sure to give good results even under idealized circumstances. As no tractable statistical model is likely to ever capture the full complexity of the processes behind sequence evolution, some model misspecification will always be with us. The inference task then depends on formulating models with enough complexity to capture the main processes we believe to be at work (thus minimizing misspecification), but which have identifiable parameters (so that in a more perfect world our inference methods  would not fail).

Unfortunately, it is not hard to conceive of data sets for which the modeling assumptions underlying today's routine analyses are strongly violated.  For instance, different parts of a single gene sequence might undergo rather different substitution processes, perhaps due to different substructures of the protein they encode. Alternatively,  lateral transfer of genetic material may have resulted in sequences that are amalgams of those evolving on different trees. Analyzing such data under a standard model simply assumes that neither of these has occurred, and so is an instance of misspecification. While one would hope there would be some indication of this as  the analysis is conducted --- perhaps by a poor likelihood score or poor convergence of a Bayesian MCMC run --- there is no guarantee that an obvious sign will appear.
 
An alternative is to consider \emph{mixture models}, which explicitly allow for such heterogeneity in the data. Mixtures consider several classes of sites which might each evolve according to a distinct process, either  on the same topological tree (a \emph{single-tree mixture model}), or on possibly different trees (a \emph{multitree mixture model}). In both cases the use of a mixture model differs from a partitioned analysis of data, in which the researcher imposes a partitioning of
the sites into classes, each of which must evolve according to a single standard model. For a mixture model, there is no \emph{a priori} partitioning; instead, the class to which a site belongs is treated as a random variable. The probability that any site is in a given class is then a parameter of the model, and thus to be inferred.

The single-tree GTR$+\Gamma$(+I) model is a familiar, but highly restricted type of mixture, with few parameters, that  is commonly used in data analysis. Only recently  \cite{Chai2011} completed a rigorous proof that the parameters of this model, including the tree topology, are identifiable from its probability distributions in most cases, and thus that it gives consistent inference under maximum likelihood.
However, the special case of the F81$+\Gamma$+I  submodel remains open \citep{AllmanAneRhodes07,Steel2009}.

On the other hand, a single-tree rate-variation model in which the rate distribution was allowed to be arbitrary was one of the earliest mixture models seen to be problematic, as every tree can produce the same distribution of site patterns \citep{Steel1994}.
The no-common-mechanism (NCM) model introduced by \cite{TufSte1997} provides another example of a mixture in which distributions
do not identify trees. However, these models are rather unusual, in that the number of their parameters grows with sequence length. This extreme over-parameterizaton is well understood, as is the implication that these models do not lead to statistically consistent inference under a maximum likelihood framework. (\cite{SteelSIN} offers a more complete and subtle discussion of NCM models and inference.) Of course these models were introduced to elucidate theoretical points, and were not intended  for data analysis. 
\medskip

Much recent work on mixture models has focused on those with a finite (though perhaps large) number of mixture components, allowing 
more heterogeneity among the classes than the simple scaling of the rate variation models.
Several papers have shown that inference from data generated by a mixture process can be poor if the analysis is based on a misspecified non-mixture model \citep{KT2004,MosVig,MosVig2}. The examples in these works indicate that we may be misled if we ignore the possibility of such heterogeneity.
This point is further underscored by \cite{Matsen2007}, who discuss why analysis with a misspecified non-mixture can lead to erroneous inference in some specific circumstances. As there is no general reason why one should expect good inference with a misspecified model, to our mind these works primarily indicate the importance of further study of mixture models, so that they may be applied intelligently when substantial heterogeneity is possibly present.

However, several works have indicated that models with a finite number of mixture classes may have theoretical shortcomings as well.  Working with no restriction on the number of classes, \cite{StefVig2007,StefVig2007b} emphasize that unless a model is special enough that there are  linear inequalities (which they call \emph{linear tests}) distinguishing between unmixed distributions arising on different trees, then there will be cases in which tree topologies cannot be identified from single-tree mixture distributions. 
\cite{Matsen2008} explore this more particularly for the Cavender-Farris-Neyman (CFN) 2-state symmetric model.

While there is no doubt that certain mixtures are problematic due to the failure of identifiability for some parameter choices, whether this is really of great practical concern is in fact not at all clear from the results mentioned so far. 
Thoughtful use of mixture models for data analysis has seemed to perform well for a number of research groups \citep{RonquistH2003,PagMea, PagelMeade,HuelSuch,LeLaGa,WLSR,EvansSullivan2012}. While publication bias against failed analyses could be responsible for a lack of reports of difficulties with mixture models in the literature, we also have not heard of such problems through our professional interactions. Of course this does not rule out the possibility that data is produced by even more heterogeneous processes that mimic those assumed in the analysis, and thus mislead us into believing an adequate model has been chosen.

Several papers \citep{Allman2006, APRS, RhodSull2011} have given a strong theoretical indication that problematic mixtures, for which trees are non-identifiable, are quite rare. Using algebraic techniques building on the idea of phylogenetic invariants, these works show in a variety of contexts that
mixture distributions cannot mimic distributions arising on other trees, for generic choices of numerical parameters. `Generic'
here has a precise meaning that informally can be expressed as ``if the model parameters are chosen at random, and thus do not have any special values or relationships among themselves.'' More formally, the set of exceptional parameters leading to non-identifiability is of strictly smaller dimension than the full parameter space. Thus if the true parameters were chosen by throwing a dart at the parameter space, with probability 1 they would lie off that exceptional set. \cite{RhodSull2011} give an upper bound on the number of classes that, for a quite general model, ensures generic identifiability of the trees in all single-tree and in many multitree mixtures. This bound is exponential in the number of taxa, and likely to be larger than the number of classes one would actually use in data analysis.

 While these positive theoretical results indicate one should seldom encounter problems with the judicious use of a mixture model in data analysis, one may still worry about the possible exceptions.
The exceptional cases are generally not explicitly characterized in these papers, and the arguments used to establish that they form a set of lower dimension are rather technical. The intuition of the authors is that the potential exceptional set one could extract from these works is likely to be much larger than the true exceptional set, 
as an artifact of the techniques of proof. 
Moreover, experience with other types of statistical models outside of phylogenetics (e.g., hidden Markov models, Bayesian networks) with similar exceptional sets of non-identifiability has shown they can still be quite useful, and are generally not problematic for data analysis.

\medskip
\subsection*{Mimicking and identifiability}

Considering models with a small number of mixture classes,
 \cite{StefVig2007,StefVig2007b} and \cite{Matsen2007} give explicit examples of parameter choices in certain 2-class CFN single-tree mixture models 
that lead to exactly the same unmixed probability distributions as a standard model on a  different tree. 
Since the unmixed model is a special case of a 2-class single-tree mixture (in which one class does not appear, due to a mixing parameter of 0, or alternatively in which the two classes behave identically), one interpretation of this result is a failure of tree identifiability
for 2-class CFN single-tree mixtures. Indeed, this example shows one cannot have identifiability across all of parameter space for this model, and thus that the generic identifiability mentioned in the last section is the best one can establish.

Another interpretation of the example, emphasized by the term `mimicking' used by \cite{Matsen2007}, is that  we could not distinguish data produced by the heterogeneous model from that produced by the unmixed one, and thus would have no indication that we should consider a mixture process as underlying the data. The simpler unmixed model would already fit data well, and we might not even consider the possibility of heterogeneity misleading us. (Of course performing an analysis of such data under the mixture model would not help us anyway, as the tree is not identifiable under it for the specific numerical parameters generating the data.)

Simpler models are nested within those allowing more heterogeneity and, as this example shows, the possibility of mimicking arises because identifiability may not hold for  all parameter values of the more complex model. The results of \citet{Allman2006, APRS, RhodSull2011}, which establish generic identifiability of  mixture models, therefore indicate that mimicking should be a rare phenomenon, requiring very special parameter choices in the more complex model. If a heterogeneous model has been shown to have generically identifiable parameters, then provided its parameters are chosen at random the probability of it mimicking a submodel is 0.
Nonetheless, if only generic identifiability of parameters of a mixture model is known, without an  explicit characterization of those special parameter choices leading to non-identifiablity, then we still have a less-than-solid understanding of when mimicking can occur.

In subsequent sections we give mathematical justification --- with no cryptic assumptions of genericity of parameters --- that a limited amount of heterogeneity in a single-tree mixture cannot mimic evolution on a different tree in most relevant circumstances.
We also show how examples of non-identifiability of trees due to mixture processes can arise from a readily understood issue of \emph{local over-parameterization}. This explains
the 2-class mimicking examples of  \cite{StefVig2007,StefVig2007b} and \cite{Matsen2007}, which are constructed for 2-state models whose parameter space is of larger dimension than the distribution space for a 4-taxon tree. However, this is not the setting in which most data analysis is likely to take place. For 4-state models encompassing those such as the general time-reversible (GTR) which are in common use, we show  even 3-class mixtures cannot mimic non-mixtures. While these positive identifiability results do not encompass the large number of mixture components allowed for generic parameters in the identifiability results of \cite{RhodSull2011}, by excluding the possibility of exceptions they are, in some sense, more complete.
Finally, for certain group-based models (Jukes-Cantor and Kimura $2$-parameter), for which linear tests exist, we also obtain results indicating that if  mimicking does occur for  multitree mixtures, then it is not entirely misleading. In the case of fully-resolved trees, any mimicking distribution can only agree with a distribution coming from one of the topological trees appearing in the mixture.
 
\medskip

The mathematical tools we use to obtain our results involve the polynomial equalities called phylognetic invariants, which have been extensively studied for both the group-based models and the general Markov model, and mixtures built from them. 
However, we
supplement these with some polynomial \emph{inequalities}. While the potential usefulness of inequalities was made clear even in the seminal paper of \cite{CF87} which introduced invariants, their study unfortunately remains much less developed than the study of invariants. Though a deeper understanding of inequalities for both unmixed and mixture models would be highly desirable, here we make do with a few \emph{ad hoc} ones.


\section*{Phylogenetic Mixture Models}

In this section, we describe the class of phylogenetic models that we study. Our definition of an unmixed phylogenetic model is broad,  encompassing most standard phylogenetic models such as the GTR, as well as those studied by \cite{StefVig2007,StefVig2007b}, \cite{Matsen2007}, and \cite{Matsen2008}. Informally, we consider continuous-time models, but do not require time-reversibility or stationarity, and allow the substitution process to change at a finite set of points on the tree. Such relaxations of the usual modeling assumptions have appeared in several works \citep{YangRoberts95,GG98,YapSpeed05}.

\smallskip

We assume that the random variables modeling characters have $\kappa\ge 2$ states, the most important values being $\kappa = 4$ (DNA models), $\kappa =2$ (purine/pyrimidine models), and $\kappa =20$ (protein models).

By a \emph{rate matrix} for  a state substitution process we mean a $\kappa \times \kappa$ matrix with nonnegative off-diagonal entries, whose row sums are all zero. (To fix a scaling, one may also impose some normalization convention.)
Such a rate matrix $Q=(q_{ij})$ generates a continuous-time $\kappa$-state Markov chain.   
Associated with $Q$ is  a directed graph, $G_{Q}$, on nodes $\{1,2, \ldots, \kappa\}$ representing states, which has an edge $i \rightarrow j$ if, and only if, $q_{ij} \neq 0$.  The process defined by $Q$  is said to be \emph{irreducible} if $G_{Q}$ is strongly connected, that is, there is a directed path from node $i$ to node $j$ for all $i, j $. Informally, this means it is possible to transition from any state to any other state, by possibly passing through other states along the way. Irreducibility guarantees that for all $t>0$ the discrete-time Markov transition matrix $\exp(Qt)$ has strictly positive entries.  Of course $\exp(Qt)$ is the identity matrix when $t=0$, and so has zero entries. 
\smallskip

Consider an unrooted, combinatorial, phylogenetic tree, $T$, in which we allow polytomies.
Then by the \emph{general continuous-time model} on $T$, we mean the following: First, possibly introduce a finite number of degree 2 nodes (in order to model a root, and points where the state substitution process changes) along any of the edges of $T$ to obtain $T'$. Then choose
some node to serve as a root of $T'$, and make
any assignment of a strictly positive $\kappa$-state distribution $\boldsymbol \pi$ at the root. Irreducible rate matrices $Q_i$ and edge lengths $t_i\in \mathbb R_{\geq 0}$ are assigned to each edge $i$ of $T'$. This notion is more general than is often used in most practical data analysis, since 1) $\boldsymbol \pi$ need not be the stationary distribution of any $Q_i$, and 2) the $Q_i$ may be different for each edge; we do not assume a common process across the tree. We at times restrict to considering only irreducible rate matrices  of a certain form (\emph{e.g.}, Jukes-Cantor, or GTR) and specialized $\boldsymbol \pi$, in order to draw conclusions about submodels.

If numerical model parameters are specified as above, then the Markov transition matrix on edge $i$ of $T'$ is $M_i=\exp(Q_it_i)$. If $T''$ denotes the tree obtained from $T'$ by suppressing non-root nodes of degree 2, and edges $i, i+1,\dots,i+r$ of $T'$ become a single edge of $T''$, then one defines a Markov matrix on that edge of $T''$ as the product $ M_iM_{i+1}\cdots M_{i+r}$.  From the assumption of irreducibility of rate matrices we immediately obtain the following.

\begin{lemma}\label{lem:positive}
Consider any choice of general continuous-time model parameters on a phylogenetic tree $T$. Then the Markov transition matrices associated to the edges of $T'$ and $T''$ are each either the identity matrix, or a nonsingular matrix with strictly positive entries.
\end{lemma}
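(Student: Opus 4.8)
The plan is to treat the two tree refinements separately, handling $T'$ directly and then bootstrapping to $T''$. For a single edge $i$ of $T'$, the transition matrix is $M_i = \exp(Q_i t_i)$ with $t_i \in \mathbb{R}_{\geq 0}$. The statement for these edges is essentially immediate from the facts already recorded: if $t_i = 0$ then $\exp(Q_i t_i) = I$, while if $t_i > 0$ then irreducibility of $Q_i$ forces $\exp(Q_i t_i)$ to have strictly positive entries. In either case $M_i$ is nonsingular, since $\exp(Q_i t_i)$ always has inverse $\exp(-Q_i t_i)$. Thus every edge matrix of $T'$ is either the identity or a nonsingular strictly positive matrix, as claimed.

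For an edge of $T''$, the associated matrix is a product $M_i M_{i+1} \cdots M_{i+r}$ of consecutive edge matrices of $T'$, each of which we have just shown is either $I$ or strictly positive, and in all cases nonsingular. Nonsingularity of the product is then free, being a product of nonsingular matrices. To pin down the dichotomy, I would first note that any identity factor may be deleted without changing the product, so it suffices to consider the product of the strictly positive factors alone. If there are none --- that is, if $t_j = 0$ for every edge $j$ of $T'$ merged into this edge --- then every factor is $I$ and the product is the identity, giving the identity case.

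Otherwise the product reduces to a nonempty product of strictly positive matrices, and the only point needing an argument is the elementary closure fact that such a product is again strictly positive. This follows by induction on the number of factors from the two-factor case: if $A$ and $B$ have all entries positive, then $(AB)_{kl} = \sum_m A_{km} B_{ml}$ is a sum of strictly positive terms and hence positive. Combining the three cases yields the claim for every edge of $T''$.

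I do not expect a serious obstacle here; the content lies entirely in the two previously stated facts about $\exp(Qt)$, and the only genuinely new observation is the product-of-positives step, which is a one-line computation. If anything warrants care, it is bookkeeping: ensuring the interspersed identity factors are correctly discarded, and that the degenerate all-zero-length case is recognized as producing the identity rather than being overlooked.
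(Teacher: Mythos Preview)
Your argument is correct and is exactly the reasoning the paper has in mind; indeed the paper treats the lemma as an immediate consequence of the preceding remarks on irreducibility and does not spell out a separate proof at all. You have simply made explicit the nonsingularity, positivity, and product-closure steps that the paper leaves to the reader.
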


The root distribution and  collection of edge transition matrices on $T''$ determine the probabilities of any site pattern occurring in sequence data. For instance, in a 5-taxon case of DNA sequences, the $4^5$ site patterns $AAAAA$, $AAAAG$, \dots, $TTTTT$ will be observed with probabilities that can be computed from the base frequencies (the entries of $\boldsymbol \pi$), and probabilities of various base substitutions over edges of the tree (the entries of the $M_i$).
The \emph{probability distribution} for a choice of general continuous-time model parameters for a fixed tree $T$ is then just the vector of the probabilities of all such site patterns. In the 5-taxon DNA case, for example, it is an ordered list of $4^5$ numbers describing expected frequencies of site patterns assuming the given parameter values.

By $\calm_{T}$ we denote the set of all probability distributions arising on $T$ for all choices of general continuous-time parameters. One can think of this object as encapsulating descriptions of all the infinite data sets that might be produced on the topological tree $T$, regardless of the specific base distribution, rate matrices, and edge lengths used. It is thus a basic theoretical object relating the general continuous-time substitution process on $T$ to data, without regard to specific numerical parameters. We therefore refer to $\calm_{T}$ as \emph{the general continuous-time model} on $T$. (Later in this paper, we use the same notation for a submodel obtained by restricting parameters to a specific form, such as Jukes-Cantor, but the distinction will be clear from the context.)

The  \emph{open phylogenetic model}, $\calm_{T}^{+} \subseteq \calm_{T}$, is the subset of distributions obtained by requiring that no internal branch lengths are
zero, that is all $t_{i}>0$ except possibly for pendant edges.   Since we allow trees to have polytomies, any distribution in  $\calm_{T}$  is contained in the open model for a possibly different tree; one merely contracts all internal edges of $T$ which were assigned branch length zero, thus introducing new polytomies. 

\medskip

If $\calt = \{T_{1}, \ldots, T_{r} \}$ is a multiset of topological trees, then the \emph{mixture model} on $\calt$ is the set $\calm_{\calt}$ of all probability distributions of site patterns of the form 
$$s_1p_1+s_2p_2+\cdots +s_rp_r,$$
where $p_i\in \calm_{T_i}$ is a probability distribution arising on $T_i$, and the $s_i\ge 0$ are \emph{mixing parameters} with $s_1+s_2+\cdots +s_r=1$. The $s_i$ can be interpreted as the probabilities that any given site is in class $i$, while $p_i$ is the vector of site pattern probabilities for that particular class.
The \emph{open mixture model}  $\calm_{\calt}^{+}$
is defined similarly, with $p_i\in \calm_{T_i}^+$.
Note that in the open mixture model we allow all mixing parameters, so that some mixture components may in fact not appear if an $s_i=0$.  If  all mixing parameters are required to be strictly positive, we denote the set of distributions by $\calm_{\calt}^{++}$.

\section*{Results}
\subsection*{Single-tree Mixture Models}

\cite{Matsen2007} and \cite{StefVig2007b} showed that under the CFN model it is possible for a $2$-class mixture on a single topological tree (that is, $\calt = \{T,T \}$) to produce distributions matching those of an unmixed model on a different tree.  \cite{Matsen2008} showed that this is possible if, and only if, the trees involved differ by a single NNI move.

Our main result in this setting shows that these possibilities are essentially a ``fluke of low dimensions,'' tied to the 2-state nature of the CFN model. Models with larger state spaces, such as the 4-states of DNA models, cannot exhibit such mimicking behavior with such a small number of mixture components.  In a subsequent section a further analysis will show that this CFN mimicking is a consequence of local over-parameterization.  

\begin{thm}\label{thm:nonmimic}
Consider the $\kappa$-state general continuous-time phylogenetic model.
Let $\calt$ consist of $\kappa -1$ copies of tree $T_{1}$, and $\cals$ consist of a single tree $T_2$.  Then $\calm_{\calt}$ and $\calm_{\cals}^{+}$ have no distributions in common, and thus mimicking cannot occur, unless $T_{1}$ is a refinement of $T_{2}$.
\end{thm}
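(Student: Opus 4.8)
The plan is to work with the contrapositive and to detect tree splits through the ranks of \emph{flattening} (edge) matrices. For a split $A\mid B$ of the taxa and a distribution $r$, let $\flat_{A\mid B}(r)$ be the $\kappa^{|A|}\times\kappa^{|B|}$ matrix whose $(\mathbf i,\mathbf j)$ entry is the probability of the site pattern having states $\mathbf i$ on $A$ and $\mathbf j$ on $B$. The fundamental fact I will use is the standard edge-invariant bound: if $A\mid B$ is displayed by a tree $T$, then every $r\in\calm_T$ satisfies $\operatorname{rank}\flat_{A\mid B}(r)\le\kappa$, since $\flat_{A\mid B}(r)$ factors as $F^{\mathsf T}\bigl(\diag(\boldsymbol\pi)\,M_e\bigr)G$ through the transition matrix $M_e$ of the single separating edge $e$. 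If moreover $r\in\calm_T^{+}$ and $e$ is internal, then Lemma~\ref{lem:positive} makes $M_e$ (and the relevant marginalizing factors) nonsingular, so the rank is \emph{exactly} $\kappa$.

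Now suppose, for contradiction, that some distribution lies in $\calm_{\calt}\cap\calm_{\cals}^{+}$, say $p=\sum_{i=1}^{\kappa-1}s_i p_i=q$ with $p_i\in\calm_{T_1}$ and $q\in\calm_{T_2}^{+}$. To conclude that $T_1$ refines $T_2$ it suffices to show that every split $A\mid B$ displayed by $T_2$ is displayed by $T_1$. Fix such a split. From the $T_2$ side, $\operatorname{rank}\flat_{A\mid B}(q)\le\kappa$. Assume, toward a contradiction, that $A\mid B$ is \emph{not} displayed by $T_1$. Then in $T_1$ the leaf set $A$ can be separated from $B$ only by deleting at least two edges, so each $\flat_{A\mid B}(p_i)$ factors through the joint states at two internal nodes, i.e. through a $\kappa^2$-dimensional bond space; when $p_i$ is nondegenerate this contribution has rank $\kappa^2$ and its diagonal core has strictly positive entries.

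The crux is then to rule out that the $(\kappa-1)$-term sum $\sum_i s_i\flat_{A\mid B}(p_i)$ equals the flattening $\flat_{A\mid B}(q)$ of an \emph{open} $q\in\calm_{T_2}^{+}$ when $A\mid B$ is not displayed by $T_1$. Here the bound $\kappa-1$ on the number of classes is decisive: the known CFN mimicking examples show that with $\kappa$ components one \emph{can} compress such a ``width-two'' flattening down to rank $\kappa$, so the theorem asserts that $\kappa-1$ components are one too few. I expect this to be the main obstacle, for two reasons. First, rank is only \emph{sub}additive under mixing, so it bounds $\operatorname{rank}\flat_{A\mid B}(p)$ only from above and cannot by itself force the rank up; a genuine lower-bound mechanism is required. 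Second, the hypothesis that $q$ is open must actually be used, since a degenerate mixture with contracted internal edges can easily yield a low-rank flattening, and positivity is exactly what excludes these coincidences. I would therefore combine the rigid Kronecker structure of the two-node factorization with the strict positivity furnished by Lemma~\ref{lem:positive} --- either exhibiting an explicit $(\kappa+1)\times(\kappa+1)$ nonsingular submatrix of $\flat_{A\mid B}(q)$, or running a dimension count on the relevant secant locus --- to derive the contradiction. Such positivity-based \emph{inequalities}, rather than invariants alone, are the essential new ingredient, and contracting zero-length internal edges reduces the general polytomous case to this resolved one.
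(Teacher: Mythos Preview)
Your approach has the right ingredients but assembles them in the wrong direction, and the resulting gap is exactly the one you yourself flag as the ``main obstacle.'' You flatten along a split $A\mid B$ displayed by $T_2$, obtaining $\operatorname{rank}\flat_{A\mid B}(q)\le\kappa$, and then need to force the rank of the \emph{mixture} $\sum_i s_i\flat_{A\mid B}(p_i)$ to exceed $\kappa$. As you correctly note, subadditivity of rank gives only upper bounds on mixtures, never lower bounds; and since the $p_i$ lie in $\calm_{T_1}$ (not $\calm_{T_1}^{+}$), you cannot even assume the individual terms have high rank. The hand-wave toward ``Kronecker structure,'' ``positivity,'' or ``a dimension count on the secant locus'' does not close this gap, and I see no clean way to do so: with $\kappa-1$ positive summands there is no obstruction to the sum having rank $\le\kappa$.

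The paper's proof simply swaps the roles of the two trees in the flattening. One first marginalizes to a quartet $K$ where $T_1|_K$ (taken binary WLOG) and $T_2|_K$ differ, so that the split of $T_1|_K$, say $12\mid 34$, is \emph{not} displayed by $T_2|_K$. Flattening along \emph{that} split, subadditivity gives the easy bound $\operatorname{rank}\flat_{12\mid 34}(p)\le(\kappa-1)\kappa$ on the mixture side. On the single-tree side, the factorization of $\flat_{12\mid 34}(q)$ through the two internal nodes of $T_2|_K$ is $(M_1\otimes M_2)\,\diag(N)\,(M_3\otimes M_4)$, and openness of $q$ together with Lemma~\ref{lem:positive} forces every factor to be nonsingular, so $\operatorname{rank}\flat_{12\mid 34}(q)=\kappa^2>\kappa(\kappa-1)$. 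The contradiction is immediate. In short: upper-bound the mixture, lower-bound the open single distribution---not the other way around.
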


Note that while the mixture on $\calt$ in this theorem has all classes evolving on the same topological tree, no further commonality across classes is assumed. The individual classes may not only have different edge-lengths associated to the tree, but also different base distributions and rate matrices.

A closely related identifiability result was already known to hold for generic choices of parameters in a slightly broader setting \citep{Allman2006}, so the contribution here is to remove the generic assumption. 
Note that for the important case of $\kappa =4$, corresponding to DNA models, this implies that we cannot have a 2- or 3-class mixture mimic the distribution on a single tree unless we allow zero length branches in the mixture components. This indicates the examples of \cite{Matsen2007} and \cite{StefVig2007,StefVig2007b} cannot be generalized to 4-state models, without passing to at least a 4-class mixture.


\subsection*{Local Over-parameterization} \label{sec:local}

Note that the examples of \cite{Matsen2007} and \cite{StefVig2007b} are allowed by Theorem \ref{thm:nonmimic}, since they are constructed for a model with $\kappa=2$ and $\calt$ a 2-element multiset. To see why the existence of such examples should not be too surprising, it is helpful to first consider an unrooted 4-leaf tree $T$ and perform a parameter count for the CFN model. A 2-class single-tree mixture on $T$ can be specified by 11 numerical parameters: for each class there are 5 Markov transition matrices with 1 free parameter (the edge length) each, and 1 additional mixing parameter. However any 4-taxon CFN mixture distribution on any 4-taxon tree lies in a certain 7-dimensional space, due to the symmetry of the model. An 11-dimensional parameter space is thus collapsed down to a subset of a 7-dimensional distribution space. Although this does not prove every distribution with such symmetry must arise from this 2-class mixture, the excess of parameters suggests that it is likely that many do. As a result, one suspects at least some non-mixture distributions on trees different from $T$ are likely to be mimicked by this 2-class mixture. This suspicion is then confirmed by explicit examples.

When a tree has many more leaves, however,  a similar parameter count for the 2-class CFN mixture can fail to indicate potential problems, since the number of model parameters grows linearly with the number of leaves, while the number of possible site patterns grows exponentially.
However, we show below that one can extend mimicking examples on small trees to larger trees, thus creating what might at first appear to be more unexpected instances of mimicking. We refer to such examples, where mimicking is produced first on a small tree by allowing
an excessive number of mixture components, and then extended to larger trees,
as arising from \emph{local over-parameterization}. This notion can be used to produce many new examples of the mimicking phenomenon, on single- or multitree mixtures.  

We distinguish here between three types of mimicking, of different degrees of severity.  For notational convenience we use $\calm_{\calt}^{*}$ to denote any of the models $\calm_{\calt}, \calm_{\calt}^{+},$ or $\calm_{\calt}^{++}$.

\begin{defn}
A mixture model $\calm_{\calt}^{*}$  \emph{weakly mimics} distributions in $\calm_{\cals}^{*}$ if $\calm_{\calt}^{*}$ and $\calm_{\cals}^{*}$
have no distributions in common, \emph{i.e.}, if
$\calm_{\calt}^{*} \cap  \calm_{\cals}^{*} \neq \emptyset$.  A mixture model  $\calm_{\calt}^{*}$  \emph{strongly mimics} distributions in $\calm_{\cals}^{*}$ if $\dim \calm_{\calt}^{*} \cap  \calm_{\cals}^{*} = \dim \calm_{\cals}^{*}$.  A mixture model  $\calm_{\calt}^{*}$ \emph{completely mimics} distributions in $\calm_{\cals}^{*}$ if $  \calm_{\cals}^{*} \subseteq \calm_{\calt}^{*}$.
\end{defn}

Thus weak mimicking requires only a single instance of probability distributions arising on $\cals$ and $\calt$ matching, for a single pair of parameter choices for the models. Strong mimicking requires a neighborhood of distributions arising on $\cals$ to be matched by ones arising on $\calt$, so that all parameter choices near a specific pair lead to mimicking.
Complete mimicking
requires every distribution arising on $\cals$ to be matched by one arising on $\calt$, so that mimicking occurs for all parameter choices.

More informally, weak mimicking that is not strong can be viewed as unlikely to be problematic in practice, since it does not occur over a range of parameter values. Similarly, strong mimicking that is not complete may be a serious problem on parts of parameter space, but
is limited in not affecting all choices of parameters. Complete mimicking, however, means it is impossible to determine if any data fit by the mimicked model actually arose from the mimicking one.

\medskip

To make the idea of local over-parameterization precise, we need the concept of a \emph{fusion tree}, as depicted in Figure \ref{fig:fusion}. Informally, one considers a `core' tree with only a few leaves, and then enlarges it to relate many more taxa, by attaching rooted trees to the leaves.
Let $T$ be the core tree relating taxa $X$.  For each $x \in X$, let $B_{x}$ be a rooted tree with taxon set $A_{x}$, where the $A_x$ have no elements in common.   A set of such trees $\calb = \{B_{x}: x \in X\}$ is called a set of \emph{fusion ends} for $X$.  The \emph{fusion tree} $T^{\calb}$, with leaf set $\cup_{x \in X} A_{x}$, is obtained from $T$ and $\calb$ by identifying each leaf  $x$ of $T$ with the root of $B_{x}$. In short, the fusion tree $T^{\calb}$ is obtained by fusing the trees
in $\calb$ onto the leaves of $T$.

\begin{figure}[h]
\begin{center}
\includegraphics[height=3.in]{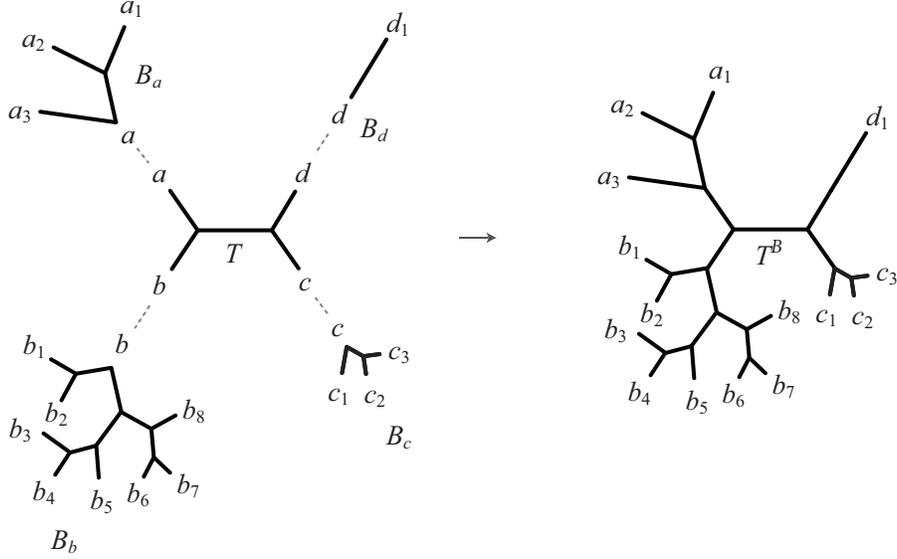} 
\end{center}
\caption{The fusion tree $T^\calb$ is constructed from a tree $T$  with leaf set $X=\{a,b,c,d\}$ and a set $\calb=\{B_a,B_b,B_c,B_d\}$ of fusion ends for $X$. The construction using $\calb$ could be applied to any of the quartet trees with leaf set $X$, yielding fusion trees differing by an NNI move from the $T^\calb$ shown here. This process underlies the extension of mimicking examples on small trees to larger ones.}\label{fig:fusion} 
\end{figure}

If $\calt$ is a collection of trees with the same leaf set $X$, and $\calb =  \{B_{x}: x \in X\}$ is  a set of fusion ends for $X$, let $\calt^{\calb}$ be the multiset $\calt^{\calb}  = \{T^{\calb} : T \in \calt \}$ of fusion trees. Thus all trees in $\calt^{\calb}$ display the same topological structure for the subtrees of the fusion ends, but can differ in their cores. The following propositions allow us to pass mimicking properties from small trees to large trees.

\begin{prop}\label{prop:fusion1}
Suppose for a taxon set $X$ that $\calm^{*}_{\calt}$ weakly mimics $\calm^{*}_{\cals}$, and that $\calb$ is a set of fusion ends.  Then $\calm^{*}_{\calt^{\calb}}$ weakly mimics $\calm^{*}_{\cals^{\calb}}$.
\end{prop}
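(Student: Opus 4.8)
The plan is to exhibit a single \emph{linear} operation that fuses one fixed choice of fusion-end parameters onto any core distribution, to observe that this operation is the same for every core topology, and then to exploit the fact that linear maps carry mixtures to mixtures with unchanged mixing weights. Since weak mimicking asserts only that $\calm^{*}_{\calt}$ and $\calm^{*}_{\cals}$ share a single distribution, a model-respecting linear map will automatically carry such a shared distribution on the cores to a shared distribution on the fusion trees.

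Concretely, I would first use the hypothesis to fix a distribution $p \in \calm^{*}_{\calt} \cap \calm^{*}_{\cals}$, together with two realizations as mixtures $p = \sum_{i} s_i p_i = \sum_{j} s'_j q_j$, where $p_i \in \calm^{*}_{T_i}$ for $T_i \in \calt$ and $q_j \in \calm^{*}_{S_j}$ for $S_j \in \cals$. Next I would fix, once and for all, numerical parameters on each fusion end $B_x$, taking all of its edge lengths strictly positive when $* \in \{+,++\}$. These choices assign to each $x \in X$ a conditional distribution of the states at $A_x$ given the state at $x$; contracting any distribution on $X$ against the product over $x$ of these conditionals produces a distribution on $\cup_{x} A_x$. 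This defines a linear map $\Phi$ that depends only on $\calb$ and the fixed fusion-end parameters, and in particular is the \emph{same} for every core tree.

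The two key properties to record are that $\Phi$ respects the model and is linear. For the first, given any realization of $p_i$ on $T_i$, appending the fixed fusion-end parameters yields a realization on $T_i^{\calb}$ whose induced distribution is exactly $\Phi(p_i)$, so $\Phi(p_i) \in \calm_{T_i^{\calb}}$, and likewise $\Phi(q_j) \in \calm_{S_j^{\calb}}$. For the openness hypotheses, the internal edges of $T_i^{\calb}$ split into those inherited from the core (strictly positive since $p_i \in \calm^{+}_{T_i}$), those internal to the fusion ends (strictly positive by our choice), and the attaching edges joining the core to the fusion ends; taking each fusion-end root edge positive and merging it with the core pendant edge at the suppressed degree-$2$ attaching node, Lemma~\ref{lem:positive} guarantees the resulting attaching branch carries a nonsingular, strictly positive transition matrix, so $\Phi(p_i) \in \calm^{+}_{T_i^{\calb}}$. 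Since $\Phi$ leaves the mixing weights untouched, the $++$ case transfers as well.

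Finally, linearity gives $\Phi(p) = \sum_i s_i \Phi(p_i) \in \calm^{*}_{\calt^{\calb}}$ and simultaneously $\Phi(p) = \sum_j s'_j \Phi(q_j) \in \calm^{*}_{\cals^{\calb}}$, so $\Phi(p)$ is a common distribution and $\calm^{*}_{\calt^{\calb}}$ weakly mimics $\calm^{*}_{\cals^{\calb}}$. I expect the only delicate step---the main obstacle---to be the openness bookkeeping in the $+$ and $++$ cases: because each pendant edge of a core tree becomes an \emph{internal} edge of the fusion tree, a witnessing $p_i$ realized with a zero-length core pendant edge would, naively, force a zero-length internal edge in $T_i^{\calb}$. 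The remedy I would use is exactly the merging above, absorbing such an edge into the adjacent strictly positive fusion-end root edge so that the combined branch is open in the sense of Lemma~\ref{lem:positive}; making this reduction airtight (in particular handling fusion ends whose root is itself a branching vertex) is the one place that requires genuine care rather than routine verification.
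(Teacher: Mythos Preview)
Your construction is the same as the paper's: pick a witnessing distribution $q$ on the cores, keep all core parameters and mixing weights, fix one set of fusion-end parameters applied identically across every tree in $\calt^{\calb}$ and $\cals^{\calb}$, and observe that the resulting distribution lies in both fusion mixture models. The paper does this in three sentences without naming the map $\Phi$; your linear-map phrasing makes the ``mixtures go to mixtures with the same weights'' step explicit, but it is not a different argument.

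Your attention to the openness bookkeeping for the $+$ and $++$ variants actually goes beyond the paper, which simply asserts the conclusion uniformly for $\calm$, $\calm^{+}$, and $\calm^{++}$ without comment. The concern you flag---that a zero-length core pendant edge becomes an internal edge of the fusion tree, and that this is only straightforwardly absorbable (via Lemma~\ref{lem:positive}) when the fusion-end root has a single child---is not addressed in the paper's proof at all.
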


\begin{proof}
Any distribution $q\in\calm_{\calt}^{*} \cap  \calm_{\cals}^{*}$ arises from parameters on the trees in $\calt$, as well as from parameters on the trees in $\cals$.  Retain these parameters on the corresponding edges of the trees in $\calt^{\calb}$ and $\cals^{\calb}$. Choose  a length and rate matrix for each edge of each tree
in $\calb$, thus determining probabilities of site patterns at the leaves of the fusion ends conditioned on root states. Use these choices for the corresponding edges in the individual fusion subtrees in $\calt^{\calb}$ and $\cals^{\calb}$. With the same mixing parameters as led to $q$, these parameters give rise to a distribution $ q^\calb \in
\calm^{*}_{\calt^{\calb}}\cap\calm^{*}_{\cals^{\calb}}$.
\end{proof}

Under an additional assumption that the mimicked model $\calm_\cals^*$ is unmixed, more can be said.

\begin{prop}\label{prop:fusion2}
Let $\cals = \{T \}$, be a single tree, and suppose that $\calm^{*}_{\calt}$ strongly mimics (or completely mimics) $\calm^{*}_{\cals}$. Then for any set $\calb$ of fusion ends for $X$,  $\calm^{*}_{\calt^{\calb}}$ strongly mimics (or completely mimics) $\calm^{*}_{\cals^{\calb}}$.
\end{prop}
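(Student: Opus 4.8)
The plan is to exploit the fact that, once the parameters on the fusion ends are fixed, the passage from a core distribution to a distribution on a fusion tree is a single \emph{linear} map shared by all the core trees. Concretely, fix a choice of edge lengths and rate matrices on the trees in $\calb$. Each fusion end $B_x$ then determines a $\kappa\times\kappa^{|A_x|}$ stochastic matrix $F_x$ recording the conditional distribution of the leaf states $A_x$ given the state at the attachment node $x$. Conditioned on the states at the core leaves $X$, the leaves lying in different fusion ends are independent, so the distribution on $T'^{\calb}$ produced from a core distribution $p\in\calm_{T'}$ (for \emph{any} core tree $T'$) is $\Phi(p)$, where $\Phi=\bigotimes_{x\in X}F_x$ is linear. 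Linearity means $\Phi$ commutes with mixing, $\Phi(\sum_i s_ip_i)=\sum_i s_i\Phi(p_i)$, so $\Phi(\calm^{*}_{\calt})\subseteq\calm^{*}_{\calt^{\calb}}$ and $\Phi(\calm^{*}_{\cals})\subseteq\calm^{*}_{\cals^{\calb}}$ for each of the three model types. I would arrange the attachment so that each $B_x$ carries a strictly positive root edge; the former pendant edge of the core then merges with this root edge, keeping the internal edges of $T'^{\calb}$ strictly positive even when a core pendant edge vanishes, which is what allows the argument to respect the $\calm^{+}$ and $\calm^{++}$ constraints.

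For the complete mimicking assertion, suppose $\calm^{*}_{\cals}=\calm^{*}_{T}\subseteq\calm^{*}_{\calt}$. Given any $P\in\calm^{*}_{\cals^{\calb}}=\calm^{*}_{T^{\calb}}$, read off its fusion-end parameters, yielding some $\Phi_P$, together with its core distribution $p$. Since every core edge is internal in $T^{\calb}$, the membership $P\in\calm^{*}_{T^{\calb}}$ forces $p\in\calm^{*}_{T}$; here it is essential that $\cals$ is a single tree, so that a single $p$ can be extracted. By hypothesis $p=\sum_i s_ip_i$ with $p_i\in\calm^{*}_{T_i}$ (and $s_i>0$ in the $\calm^{++}$ case), whence $P=\Phi_P(p)=\sum_i s_i\Phi_P(p_i)\in\calm^{*}_{\calt^{\calb}}$. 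No genericity is required, because each $P$ supplies its own fusion ends.

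For the strong mimicking assertion I would first fix the fusion-end parameters generically so that each $F_x$ has full rank $\kappa$ (for instance via positive branch lengths, using Lemma \ref{lem:positive}); then $\Phi=\bigotimes_x F_x$ is injective and hence dimension-preserving. Let $C=\calm^{*}_{\calt}\cap\calm^{*}_{T}$, so that $\dim C=\dim\calm^{*}_{T}$ by hypothesis. Each $c\in C$ gives $\Phi(c)\in\calm^{*}_{\calt^{\calb}}\cap\calm^{*}_{\cals^{\calb}}$, so $\Phi(C)$ lies inside the intersection we must bound from below. Describing $\calm^{*}_{T^{\calb}}$ through its parametrization (core parameters)$\times$(fusion-end parameters), this intersection contains the image of the sublocus where the core distribution lies in $C$; since $C$ is full dimensional in $\calm^{*}_{T}$, restricting the core factor to $C$ should not lower the dimension of the image, which already equals $\dim\calm^{*}_{T^{\calb}}=\dim\calm^{*}_{\cals^{\calb}}$. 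The intersection is therefore full dimensional and strong mimicking is inherited.

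The two places needing care — and the main obstacles — are the positivity bookkeeping and the final dimension count. For the $\calm^{+}$ and $\calm^{++}$ versions one must guarantee that the core edges, now internal in the fusion tree, stay strictly positive; the positive root edges on the fusion ends handle this, but the argument fails if the ends are attached without them. The genuine technical crux is the dimension count: using injectivity of $\Phi$ and the fact that the fusion parametrization is polynomial, one must show that replacing the full core model by its full-dimensional sublocus $C$ and then fusing does not drop the dimension below $\dim\calm^{*}_{T^{\calb}}$. I would establish this by selecting a common smooth point of $C$ and $\calm^{*}_{T}$ at which the fusion parametrization has locally maximal rank and comparing tangent spaces: the core directions available within $C$ already span all core directions, and the fusion-end directions supply the remainder, recovering the full tangent space of $\calm^{*}_{T^{\calb}}$.
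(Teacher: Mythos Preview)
Your approach is essentially the same as the paper's: both use the fusion construction from Proposition~\ref{prop:fusion1}, observe that the fusion-end parameters may be varied freely, and exploit the fact that $\cals$ consists of a single tree so that the construction sweeps out all (for complete mimicking) or a full-dimensional family (for strong mimicking) of distributions in $\calm^{*}_{\cals^{\calb}}$. Your explicit framing via the linear map $\Phi=\bigotimes_x F_x$ and your attention to the positivity bookkeeping and the dimension count go well beyond the paper's proof, which is a two-sentence sketch that simply asserts the conclusion once it is noted that the fusion-end parameters are arbitrary.
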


\begin{proof}
This follows from the same argument as was given for Proposition \ref{prop:fusion1}, with the additional observation that the parameters assigned to
edges in the fusion ends can be varied arbitrarily.  Since $\cals$
consists of a single tree, this will give a give a full dimensional set of distributions
in $\calm^{*}_{\cals^{\calb}}$ which are mimicked by distributions in 
$\calm^{*}_{\calt^{\calb}}$.  If $\calm^{*}_{\calt}$ completely mimics $\calm^{*}_{\cals}$,
note that every distribution in $\calm^{*}_{\cals^{\calb}}$ arises from our construction
so that  $\calm^{*}_{\calt^{\calb}}$  completely mimics $\calm^{*}_{\cals^{\calb}}$. 
\end{proof}

These propositions allow the construction of explicit examples of mimicking behavior on large trees from those found on small trees.  A typical result of this type, using quartet trees as the core, is:

\begin{thm}\label{thm:nni}
Let $\calt$ consist of $r$ copies of the quartet tree $T_{12|34}$, and $\cals$ consist of $s$ copies of the quartet tree $T_{13|24}$.  Let $T$ and $T'$ be trees with at least 4 leaves that differ by an NNI move,  $\calt'$ consist of $r$ copies of $T$, and $\cals' $ consist of $s$ copies of $T'$. 

If  $\calm^{*}_{\calt}$ weakly mimics $\calm^{*}_{\cals}$, then   $\calm^{*}_{\calt'}$ weakly mimics $\calm^{*}_{\cals'}$.
Furthermore, if  $\calm^{*}_{\calt}$ strongly (or completely) mimics $\calm^{*}_{\cals}$ and $s = 1$, then $\calm^{*}_{\calt'}$ strongly (or completely)  mimics $\calm^{*}_{\cals'}$
\end{thm}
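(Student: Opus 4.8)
The plan is to recognize any two trees differing by an NNI move as a pair of fusion trees over the quartet cores $T_{12|34}$ and $T_{13|24}$, sharing a common set of fusion ends, and then to apply Propositions \ref{prop:fusion1} and \ref{prop:fusion2} directly.

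First I would fix the internal edge $e$ of $T$ on which the NNI move acts, taking its endpoints $u$ and $v$ to have degree three (the standard setting for NNI). Removing $u$ and $v$ together with their incident edges decomposes $T$ into four subtrees $B_1,B_2,B_3,B_4$, with $B_1,B_2$ formerly attached at $u$ and $B_3,B_4$ at $v$; each $B_i$ is rooted at the vertex that had been adjacent to $u$ or $v$. Setting $X=\{1,2,3,4\}$ and $\calb=\{B_1,B_2,B_3,B_4\}$, I would check that $T=T_{12|34}^{\calb}$, since reattaching the $B_i$ at the corresponding leaves of the core quartet $T_{12|34}$ reconstructs $T$. (If $T$ has exactly four leaves the $B_i$ are single vertices and the fusion ends are trivial.) The NNI move across $e$ rearranges which subtrees meet at $u$ and at $v$ while leaving each $B_i$ and its root unchanged, and is thus precisely a change of core quartet with the fusion ends held fixed. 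Choosing the labeling of the $B_i$ so that the resulting core is $T_{13|24}$, we obtain $T'=T_{13|24}^{\calb}$. Under this identification $\calt'=\calt^{\calb}$ and $\cals'=\cals^{\calb}$.

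The weak-mimicking claim is then immediate from Proposition \ref{prop:fusion1}: since $\calm^{*}_{\calt}$ weakly mimics $\calm^{*}_{\cals}$, passing to the fusion ends $\calb$ gives that $\calm^{*}_{\calt^{\calb}}=\calm^{*}_{\calt'}$ weakly mimics $\calm^{*}_{\cals^{\calb}}=\calm^{*}_{\cals'}$. For the strong or complete claim, the hypothesis $s=1$ means $\cals=\{T_{13|24}\}$ is a single tree, which is exactly the setting of Proposition \ref{prop:fusion2}. Applying that proposition with $T_{13|24}$ as the single mimicked tree and the same fusion ends $\calb$ yields that $\calm^{*}_{\calt'}$ strongly (respectively completely) mimics $\calm^{*}_{\cals'}$.

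The only substantive step is the combinatorial identification in the second paragraph: that an NNI move is exactly a change of core quartet in the fusion-tree construction, with the fusion ends and their roots held fixed. Once this is correctly set up, both parts of the theorem reduce to citations of the two fusion propositions, with no estimates or computations remaining. The point requiring care is the bookkeeping of which subtrees sit at each endpoint of $e$ and the choice of labeling that makes the NNI produce the core $T_{13|24}$ rather than $T_{14|23}$; this is a matter of fixing consistent labels rather than a genuine difficulty.
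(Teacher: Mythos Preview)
Your proposal is correct and follows exactly the paper's approach: the paper's proof consists of the single observation that two trees differ by an NNI move if and only if they are obtained by applying fusions to two differing quartet trees, after which Propositions~\ref{prop:fusion1} and~\ref{prop:fusion2} are invoked. You have simply spelled out in more detail the combinatorial identification that the paper asserts in one sentence.
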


\begin{proof}
Two trees differ by an NNI move if, and only if, they are obtained from applying fusions to two differing quartet trees.  Hence, we can apply Propositions \ref{prop:fusion1} and \ref{prop:fusion2}.
\end{proof}

In particular, Theorem \ref{thm:nni} implies that if quartets give mimicking behavior, then we will have mimicking behavior on trees of arbitrary size.  (Note conversely that Theorem 31 of \cite{Matsen2008} shows that the only way 2-class single-tree CFN mixtures can mimic CFN non-mixtures on large trees is through such a process applied to quartet over-parameterization.)

Consider now the general continuous-time model on a 4-leaf tree. With 5 edges, a distribution is specified by  $\approx 5 \kappa^{2}$ numerical parameters.  Since there are no linear tests for this model, and the probability distribution lies in a space  of dimension $\kappa^4-1$, we expect that a mixture of more than $\approx \kappa^{2}/5$ components will include an open subset of the probability simplex. Hence such a model is likely to display mimicking behavior.  Thus some sort of mimicking seems unavoidable for even moderately sized mixtures. To illustrate,  with DNA sequences and $\kappa=4$, an unmixed model  is specified by 63 parameters, so the 4-class
mixture model has enough parameters ($4\times 63+3=255$) that it is likely to include a full-dimensional subset (since $\kappa^4-1=255$) and produce mimicking.

Note that mimicking of the sort produced by local over-parameterization need not be limited to that arising from quartet trees as in the specific example above. With enough mixture components, for some models it may be possible for a mixture on a relatively small tree to mimic a distribution from another tree, differing by more than a single NNI move. This mimicking would again extend to larger trees, using the fusion process of Propositions \ref{prop:fusion1} and \ref{prop:fusion2}. 

From a practical perspective, however, mimicking through local over-parameterization seems unlikely to be much of an issue in most data analyses, since the mixture parameters leading to it require that the mixed processes differ only on a  small part of the tree, and are identical elsewhere. Researchers studying biological situations in which this might be plausible should, however, be aware of the possibility.

Finally, we emphasize that we have not shown that local over-parameterization is the only possible source of mimicking. It would be quite interesting to have examples of mimicking of other sorts, or extensions of Theorem 31 of \cite{Matsen2008} to other models and more mixture components.

\subsection*{Models with Linear Tests}

An early motivation for the study of linear invariants for phylogenetic models was that they are also invariants for mixture models on a single tree, and thus offered hope for determining tree topologies even under heterogeneous processes across sites. 
While poor practical performance \citep{Huel95} even in the unmixed case led to their abandonment as an inference tool, they remain useful for theoretical purposes. However, among the commonly-studied phylogenetic models, the Jukes-Cantor (JC) and Kimura $2$-parameter (K2P) models are the only ones which possess phylogenetically-informative linear invariants. 

\cite{StefVig2007} used these linear invariants and the observation that they can be used to give linear tests, to show that if $\cals$ and $\calt$ are multisets each consisting of a single repeated $n$-leaf binary (fully-resolved) tree, and these trees are different, then $\calm_{\calt}^{+}$ and $\calm_{\cals}^{+}$ have no distributions in common, regardless of the number of mixture components.  We next explore the extent to which these results can be extended to nonidentical tree mixtures for the JC and K2P models.

\begin{thm}\label{thm:kimura}
Consider the Jukes-Cantor and Kimura $2$-parameter models.
Let $\cals$ be a multiset of many copies of tree $T_1$ on $X$, and $\calt$ an arbitrary multiset of trees on $X$.  If $\calm_{\cals}$ and $\calm_{\calt}^{++}$ contain a common distribution, then for every four element subset $K \subseteq X$, and for all $T \in \calt$, either $T|_{K}$ is an unresolved (star) tree, or $T|_{K} = T_{1}|_{K}$. Thus all trees in $\calt$ have $T_1$ as a binary resolution.  

Furthermore, if  all trees $T \in \calt$ are binary, and $T_{1} \notin \calt$ then $\calm_{\cals}$ and $\calm_{\calt}^{+}$ have no distributions in common.
\end{thm}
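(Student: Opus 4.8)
The plan is to obtain the final statement as a corollary of the first part of the theorem, via an argument by contradiction whose only real content is a reduction from the model $\calm_{\calt}^{+}$, which permits vanishing mixing parameters, to a model $\calm_{\calt'}^{++}$ in which every mixing parameter is strictly positive. All of the genuine analysis (the linear invariants and linear tests for JC and K2P) is already carried in the first part, so here I would only need careful bookkeeping.

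First I would suppose, toward a contradiction, that some distribution $q$ lies in $\calm_{\cals}\cap\calm_{\calt}^{+}$. Writing $q=\sum_i s_i p_i$ with $p_i\in\calm_{T_i}^{+}$, $T_i\in\calt$, $s_i\ge 0$ and $\sum_i s_i=1$, I would simply discard every component whose mixing parameter vanishes. This produces a sub-multiset $\calt'\subseteq\calt$, nonempty because the $s_i$ sum to $1$, for which $q\in\calm_{\calt'}^{++}$. Crucially, the multiset $\cals$ is never altered by this reduction, so $\cals$ still consists of many copies of $T_1$ while $\calt'$ is an arbitrary multiset of trees on $X$; hence the hypotheses of the first part of the theorem are met with $\calt$ replaced by $\calt'$, and in particular the quantitative ``many copies of $T_1$'' requirement (a bound depending only on $X$, coming from the linear-test/dimension count) is preserved.

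Next I would invoke the first part with $\calt'$ in place of $\calt$: since $q\in\calm_{\cals}\cap\calm_{\calt'}^{++}$, for every four-element $K\subseteq X$ and every $T\in\calt'$ one has $T|_{K}$ star or $T|_{K}=T_{1}|_{K}$. But each $T\in\calt'\subseteq\calt$ is binary by hypothesis, so $T|_{K}$ is resolved for every quartet $K$, forcing $T|_{K}=T_{1}|_{K}$ for all $K$. Since a tree is determined by its quartet topologies, and $T$ is fully resolved, this gives $T=T_{1}$ for each $T\in\calt'$; equivalently, the only binary resolution of the binary tree $T$ is $T$ itself, so $T=T_1$. As $\calt'$ is nonempty, $T_{1}\in\calt'\subseteq\calt$, contradicting $T_{1}\notin\calt$. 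Therefore no common $q$ exists. I expect no analytic obstacle; the one point to state carefully is the passage from $\calm_{\calt}^{+}$ to $\calm_{\calt'}^{++}$ and the observation that a binary tree admits no proper refinement, which is exactly what converts the first part's quartet conclusion into the equality $T=T_{1}$.
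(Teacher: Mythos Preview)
Your proposal addresses only the ``Furthermore'' clause of the theorem, taking the first conclusion (about quartet restrictions) as already established. For that second part, your argument is essentially the paper's own: reduce from $\calm_{\calt}^{+}$ to $\calm_{\calt'}^{++}$ by discarding components with vanishing mixing weight, apply the first part to $\calt'$, observe that a binary tree has only resolved quartets so $T|_K=T_1|_K$ for all $K$, and conclude $T=T_1$ by the fact that quartets determine the tree---a contradiction. This is exactly what the paper does in its final paragraph.

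One small misreading: the phrase ``many copies of $T_1$'' in the hypothesis is not a quantitative lower bound coming from a dimension count; it simply means $\cals$ consists of an arbitrary (unspecified) number of copies of $T_1$. The linear test $l(q)=q_{GGGG}-q_{GGTT}$ vanishes on \emph{every} mixture over $T_1$, regardless of the number of components, so no bound is needed or preserved. This does not affect your argument, but you should drop that parenthetical.

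If you are expected to prove the full theorem, you still owe the first part: marginalize to an arbitrary quartet $K$, and use Proposition~\ref{prop:quartets} (the linear invariant $l$ vanishes on $\calm_{T_1|_K}$ but is strictly positive on $\calm_{T'}^{+}$ for the two incompatible resolved quartets $T'$), together with linearity of $l$, to force every $T\in\calt$ to have $T|_K$ either star or equal to $T_1|_K$.
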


Informally,  the last statement of this theorem states that arbitrary multitree phylogenetic mixtures on fully-resolved trees cannot mimic mixtures on a single tree, unless that tree appears in some component of the mixture. Thus if one erroneously assumed such a mimicking distribution was from a single-tree mixture, the single tree one would recover would in fact reflect the truth for at least one mixture component.

In the case that $\cals=\{T_1\}$, so $\calm_\cals$ is not a mixture but rather a standard model, for the JC and K2P models
this again rules out any mimicking examples of
the sort \cite{Matsen2007} and \cite{StefVig2007b} give for CFN, unless one allows zero length branches.  This clearly indicates the special nature that any such exceptional cases must have.

Our final theorem shows that the special case of mimicking allowed by Theorem \ref{thm:kimura} actually occurs for the Jukes-Cantor model. We provide a construction of such mimicking, where $\mathcal{T}$ contains nonbinary trees that are degenerations of the tree
$T$.

\begin{thm}\label{thm:contract}
Let $T$ be a tree with internal vertex $v$ which is adjacent to three other vertices
$u_{1}, u_{2}, u_{3}$.  For $i = 1,2$, let $T_{i}$ be the 
tree obtained from
$T$ by contracting the edge $u_{i}v$.  Let $\cals = \{T\}$ and $\calt = \{T_{1}, T_{2}\}$.
Then, under the Jukes-Cantor model $\calm^{*}_{\calt}$ completely
mimics $\calm^{*}_{\cals}$.
\end{thm}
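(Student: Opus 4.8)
The plan is to pass to the Hadamard (discrete Fourier) transform for the Jukes--Cantor model, which diagonalizes the group-based ($\zz_2 \times \zz_2$) structure. Under this transform a Jukes--Cantor distribution on a tree becomes a product over edges, each edge $e$ contributing the factor $1$ on the trivial character and its repeated eigenvalue $\lambda_e \in (0,1]$ on every nontrivial character; equivalently, one may use the random-cluster picture in which each edge is independently retained with probability $\lambda_e$ and each resulting component is painted an independent uniform state, so that the leaf distribution depends only on the induced partition of the leaves into connected blocks. The trees $\cals = \{T\}$, $T_1$, and $T_2$ agree away from the edges $e_1 = u_1v$, $e_2 = u_2v$, $e_3 = u_3v$ meeting at $v$, so I would assign identical parameters to all other edges and collect their contribution into a common factor $R$. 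Since the transform is linear and invertible, the desired identity $p_T = s\,p_{T_1} + (1-s)\,p_{T_2}$ then reduces to matching only the ``vertex factor'' at $v$: a finite system in the five unknowns $s$, the eigenvalues $\mu_2,\mu_3$ on $e_2,e_3$ in $T_1$, and the eigenvalues $\nu_1,\nu_3$ on $e_1,e_3$ in $T_2$ (each contracted edge carrying eigenvalue $1$).

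The construction itself I would obtain by conditioning the random-cluster process on $T$ on whether $e_1$ is retained. With probability $\lambda_1$ it is retained, and the conditional law is exactly that of $T_1$ with $\mu_2 = \lambda_2$, $\mu_3 = \lambda_3$ (contracting $e_1$ is the same as forcing it present). With probability $1-\lambda_1$ it is cut; then $u_1$ and its pendant subtree form an independent block, the vertex $v$ has effective degree two, and the two surviving edges $e_2,e_3$ compose into a single edge of eigenvalue $\lambda_2\lambda_3$, so the conditional law is exactly that of $T_2$ with $\nu_1 = 0$ and $\nu_3 = \lambda_2\lambda_3$. This yields the exact mixture $p_T = \lambda_1\,p_{T_1} + (1-\lambda_1)\,p_{T_2}$ and already proves complete mimicking in the closure of $\calm_\calt$. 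I expect a quick check that the naive ``symmetric'' guesses ($\mu_3 = \nu_3$, or $s = \tfrac12$ with $\mu_2 = \nu_1$) cannot work: eliminating variables forces the degenerate relation $(1-\lambda_1)(1-\lambda_2) = 0$, which is precisely why the viable solution is the asymmetric one coming from conditioning.

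The remaining, and genuinely delicate, step is that this explicit solution sits on the boundary of parameter space: every parameter lies strictly in $(0,1)$ except $\nu_1 = 0$, an infinite branch length disallowed in $\calm_\calt$, $\calm^+_\calt$, and $\calm^{++}_\calt$. The hard part will be pushing $\nu_1$ off zero while keeping all five parameters simultaneously in range. To do this I would eliminate $\mu_3$, $\nu_3$, and $s$ from the vertex system and reduce it to the single relation $\tfrac{c\,\mu_2}{1-\mu_2} + \tfrac{d\,\nu_1}{1-\nu_1} = a$, where $a = \lambda_1\lambda_2\lambda_3$, $c = \lambda_1\lambda_3(1-\lambda_2)$, $d = \lambda_2\lambda_3(1-\lambda_1)$ (the value of $s$ is then fixed by $s\mu_2 + (1-s)\nu_1 = \lambda_1\lambda_2$). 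This relation cuts out a one-parameter curve of solutions whose endpoint is the conditioning solution $(\mu_2,\nu_1) = (\lambda_2,0)$; decreasing $\mu_2$ slightly below $\lambda_2$ forces $\nu_1$ slightly above $0$. A short continuity argument, using that the limiting values $s \to \lambda_1$, $\mu_2 \to \lambda_2$, $\mu_3 \to \lambda_3$, $\nu_3 \to \lambda_2\lambda_3$ are all strictly interior, then gives, for every $\lambda_1,\lambda_2,\lambda_3 \in (0,1)$, a genuine solution with all five parameters and $s$ in $(0,1)$.

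Finally I would assemble the pieces: for an arbitrary distribution in $\calm^*_\cals$, the above produces the required mixture with strictly interior local parameters and common subtree parameters $R$, so the two components lie in $\calm^+_{T_1}$ and $\calm^+_{T_2}$ and the mixing weight is strictly positive. This places $p_T$ in $\calm^{++}_\calt \subseteq \calm^+_\calt \subseteq \calm_\calt$; since the $\lambda_i$ and $R$ ranged over all admissible values, $\calm^*_\cals \subseteq \calm^*_\calt$ in each of the three cases, which is exactly complete mimicking.
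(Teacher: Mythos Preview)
Your proposal is correct and follows essentially the same route as the paper: both reduce to a local problem at $v$ (the paper via an explicit 3-leaf Fourier computation followed by the fusion construction of Proposition~\ref{prop:fusion2}, you via factoring out the common subtree contribution $R$), identify the same boundary solution with one edge eigenvalue equal to $0$, and then perturb into the interior (the paper by the implicit function theorem on the Jacobian, you by eliminating to a one-parameter curve and invoking continuity). Your random-cluster conditioning argument is a pleasant conceptual way to \emph{discover} the boundary identity that the paper simply writes down and verifies.
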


\section*{Conclusion}

Interest in the analysis of data sets produced by heterogeneous evolutionary processes is likely to grow, as larger data sets are more routinely assembled. With the increased complexity of heterogeneous models, however, comes the potential loss of ability to validly infer even the tree (or trees) on which the models assume evolution occurs. Such a failure can happen not because data is insufficient to infer parameters well, but rather due to theoretical shortcomings such as non-identifiability of parameters or mimicking behavior. Extreme instances of mixture models, such as the ``no common mechanism'' model, are known to exhibit such flaws.

While one might wish that software allowing the use of mixture models could warn one if a chosen model is theoretically problematic, this is of course asking too much. A programmed algorithm applied to a non-identifiable model still runs, and produces some output. Programming decisions that have no effect on the output when an identifiable model is used may result in certain biases under a non-identifiable one, so that, under a maximum likelihood analysis for instance, it appears that a particular parameter value has been inferred even though other values produce the same likelihood. In the same vein, a Bayesian MCMC analysis may have poor convergence, and the posterior distribution may be highly sensitive to the choice of prior. Thus theoretical understanding of identifiability issues are essential.

Establishing which phylogenetic mixture models have few, or no,  theoretical shortcomings has proven difficult, but a collection of results
has now emerged that can at least guide a practitioner.  \citet{RhodSull2011} provide the largest currently-known bound on
how many mixture components can be used in a model before identifiability may fail, a bound that is exponential in the number of taxa.
However, this bound is established only for generic choices of parameters. While similar generic results for complex statistical models outside of phylogenetics
are generally accepted as indications a model may be useful, it is still desirable to understand the nature of possible exceptions.

Explicit examples have shown exceptions do indeed exist for phylogenetic mixtures, and in particular that the mimicking of an unmixed model  by a mixture can occur, even with fairly limited heterogeneity. However the structure of known examples is quite special, depending on what we have called local over-parameterization. We have also shown here that local over-parameterization provides a general means by which examples of lack of identifiability or mimicking can be constructed in the phylogenetic setting. While a simple check that the number of parameters of a complex model exceeds  the number of possible site patterns can serve as a indication of a failure of identifiability in other circumstances, this check may not uncover problems due to local over-parameterization.  

Although we do not believe problems due to mimicking through local over-parameterization are at all common in data analysis, those analyzing data which could plausibly be produced by heterogeneous processes should be aware of the possibility.  Mimicking due to local over-parameterization arises because of excessive heterogeneity of a mixture on a small core part of the tree, combined with homogeneity elsewhere in the tree. The plausibility of this occurring must be judged on biological grounds.
If a mixture remains heterogeneous over the entire tree, then by our understanding of generic identifiability of model parameters, mimicking should not occur, with probability 1.

Under more assumptions than those of \citet{RhodSull2011}, we have shown here that it is possible to rule out some undesirable model behavior. If the number of mixture components is small (3 or fewer for DNA models) then there can be no mimicking of an unmixed model by a single-tree mixture of general continuous-time models. Attempting to raise this bound would likely require carefully cataloging exceptional cases, including those arising from local over-parameterization and other causes (if they exist). The technical challenges of doing this may mean that theorems indicating exact circumstances under which a given mixture model may lack parameter identifiability will elude us for some time.

Finally, in the even more specialized setting of certain group-based models, previous work had shown that mixtures on one tree topology could not mimic those on another, even if arbitrarily many mixture components are allowed. We extended this in Theorem \ref{thm:kimura} to show that
a mixture on many different trees could not mimic that on a single tree unless there are strong relationships between the tree topologies.
Though investigations with these models have little direct applicability to current practice in data analysis, the insights gained provide some indications of how more complicated models might behave.

\section*{Funding}

The work of Elizabeth Allman and John Rhodes is
supported by the U.S. National Science Foundation (DMS
0714830), and that of Seth Sullivant by the David and Lucille
Packard Foundation and the U.S. National Science Foundation (DMS 0954865).

\section*{Acknowledgements}

This work was begun at the Institut Mittag-Leffler, 
during its Spring 2011 program  `Algebraic Geometry with a View Towards Applications.'
The authors thank the Institute and program organizers for both support and hospitality.
 

\bibliography{Mixtures}

 
\section*{Appendix: Mathematical Arguments}

To prove Theorem \ref{thm:nonmimic} we first handle the special case of $4$-leaf trees. We need the following definition.

\begin{defn} If $P$ is a probability distribution for a $\kappa$-state phylogenetic model on a $n$-taxon tree, we view it as 
an $n$-dimensional $\kappa\times \kappa\times \cdots \times \kappa$ tensor, or array, of probabilities, $P=(p_{i_1i_2\dots i_n})$, where the index $i_l$ refers to the state at leaf $l$. Then given any bipartition of the leaves into non-empty subsets $\{1,2,\dots,n\}=A\sqcup B$, the $A|B$ \emph{flattening} of $P$ is the $\kappa^{|A|}\times \kappa^{|B|}$ matrix $\flat_{A|B}$ with the same entries as $P$ but with rows indexed by state assignments to leaves in $A$, and columns indexed by state assignments to leaves in $B$.
\end{defn}

\begin{lemma}\label{lem:quartet}Consider 4-leaf trees $T_1$ with
split $12|34$, and $T_{2}$ either the tree with split $13|24$ or the star tree. Then the statement of Theorem \ref{thm:nonmimic} holds.
That is,
 $\calm_{\calt} \cap \calm_{\cals}^{+} = \emptyset$ unless $T_{2}$ is the star tree.

\end{lemma}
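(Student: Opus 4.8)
The plan is to exploit the flattening $\flat_{12|34}$, taken along the split that resolves $T_1$ but \emph{not} $T_2$, and to bound its rank from opposite sides for the two models. Specifically, I will show that every distribution in $\calm_{\calt}$ yields a matrix $\flat_{12|34}$ of rank at most $(\kappa-1)\kappa = \kappa^2-\kappa$, while every distribution in $\calm_{\cals}^{+}$ (when $T_2$ has split $13|24$) yields a matrix of full rank $\kappa^2$. Since $\kappa^2-\kappa<\kappa^2$, the two families of distributions cannot meet, forcing $\calm_{\calt}\cap\calm_{\cals}^{+}=\emptyset$.

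For the mixture side I would first recall the standard edge-flattening bound: if $p_i\in\calm_{T_1}$ and $12|34$ is the split of $T_1$, then rooting at the internal vertex adjacent to leaves $1,2$ and factoring the sum over the two internal states writes $\flat_{12|34}(p_i)=X^{T}W$ with $X,W$ of size $\kappa\times\kappa^2$, so $\operatorname{rank}\flat_{12|34}(p_i)\le\kappa$. Because the flattening map is linear, a mixture $p=\sum_{i=1}^{\kappa-1}s_ip_i\in\calm_{\calt}$ satisfies
\[
\operatorname{rank}\flat_{12|34}(p)\;\le\;\sum_{i=1}^{\kappa-1}\operatorname{rank}\flat_{12|34}(p_i)\;\le\;(\kappa-1)\kappa
\]
by subadditivity of rank. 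This is the easy half.

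The main work is the opposite bound for $T_2$ with split $13|24$, where the point is to obtain \emph{full} rank for \emph{all} open-model parameters, not merely generically. Rooting at the internal vertex $u$ adjacent to leaves $1,3$, with $N$ the transition matrix on the internal edge to the vertex $w$ adjacent to $2,4$, the entries of $\flat_{12|34}$ factor as
\[
\flat_{12|34}(p)=U^{T}D\,V,\qquad U=M^{(1)}\otimes M^{(2)},\quad V=M^{(3)}\otimes M^{(4)},
\]
where $D$ is the $\kappa^2\times\kappa^2$ diagonal matrix with entries $D_{(a,b),(a,b)}=\pi_a N_{ab}$. By Lemma \ref{lem:positive} each pendant matrix $M^{(j)}$ is nonsingular, so the Kronecker products $U$ and $V$ are nonsingular and $\operatorname{rank}\flat_{12|34}(p)=\operatorname{rank}D$. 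Since the root distribution $\boldsymbol\pi$ is strictly positive and the internal edge of the open model has positive length, $N$ is strictly positive (again via Lemma \ref{lem:positive}), so every diagonal entry $\pi_a N_{ab}$ is positive and $\operatorname{rank}D=\kappa^2$. Comparing with the mixture bound produces the contradiction and proves the lemma in this case. The crux of the whole argument is precisely this uniform full-rank statement; the factorization $\flat_{12|34}=U^{T}D\,V$ is what reduces it to the positivity of $\boldsymbol\pi$ and of the internal-edge matrix, letting us avoid any appeal to genericity.

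Finally, I would record why the star tree is genuinely the exception. If $T_2$ is the star, its single internal vertex removes the internal edge, the same factorization collapses the double sum to a single sum over $\kappa$ states, and $\flat_{12|34}$ again has rank at most $\kappa$. Thus no rank obstruction arises, consistent with the fact that $T_1$ refines the star so that mimicking is then possible, and we correctly refrain from asserting emptiness in that case.
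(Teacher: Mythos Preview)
Your proof is correct and follows essentially the same route as the paper: compare ranks of $\flat_{12|34}$ on the two sides, using Lemma~\ref{lem:positive} to obtain full rank $\kappa^{2}$ on $\calm_{T_2}^{+}$ via the factorization $(M_1\otimes M_2)\,\diag(N)\,(M_3\otimes M_4)$ with a strictly positive diagonal middle factor, and noting the rank drops to $\kappa$ for the star tree. The only difference is cosmetic: the paper cites \cite{Allman2006} and \cite{Eriksson2005} for the mixture bound $\operatorname{rank}\flat_{12|34}(p)\le\kappa(\kappa-1)$, while you supply it directly from the rank-$\kappa$ bound for each component together with subadditivity of rank under sums.
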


\begin{proof}
Let $p$ denote a probability distribution $p \in \calm_{\calt}$, which we consider as a $4$-dimensional tensor.  Consider the $\{1,2\}|\{3,4\}$ flattening $\flat_{12|34}(p)$, which is a $\kappa^{2} \times \kappa^{2}$ matrix.  From \cite{Allman2006} or \cite{Eriksson2005} it is known that if $p \in \calm_{\calt}$ then the rank of $\flat_{12|34}(p)$ is at most $\kappa(\kappa - 1)$.  

On the other hand, if  $T_2=13|24$ and $q \in \calm_{\cals}^{+}= \calm_{T_2}^{+}$, then the matrix $\flat_{12|34}(q)$ has a factorization as
\begin{equation}\flat_{12|34}(q) = (M_1 \otimes M_2)  {\diag}(N)  (M_3 \otimes M_4)\label{eq:flat}
\end{equation}
where $M_i, 1\le i\le 4$ are the transition matrices associated with the leaf edges in the tree, and $N=\diag(\boldsymbol \pi)M_5$  where $M_5$ is the transition matrix associated to the internal edge and we have assumed the tree root is at one end of that edge. Here $\diag(N)$ denotes a $\kappa^2\times\kappa^2$ diagonal matrix constructed with the entries of $N$ on its diagonal in an appropriate order. 
By Lemma \ref{lem:positive}, all transition matrices for the model $\calm_{T_2}$ are nonsingular.  Thus the $\kappa^{2} \times \kappa^{2}$ matrices $M_1 \otimes M_2$ and $M_3 \otimes M_4$ are nonsingular.  Also by  Lemma \ref{lem:positive}, for the open model $\calm_{T_{2}}^{+}$, the matrix ${\diag}(N)$ is nonsingular since all the entries of $\boldsymbol \pi$  and $M_5$ are nonzero.  Thus
if $q\in \calm_{T_{2}}^{+}$, $\flat_{12|34}(q)$ has rank $\kappa^2$. 

If $T_2$ is the star tree, then formula \eqref{eq:flat} still holds if one sets $M_5=I$. In this case the matrix ${\rm diag}(N)$ is singular, and the rank of $\flat_{12|34}(q)$ is $\kappa$.

These conditions on the rank of $\flat_{12|34}(q)$ now imply the desired conclusion.
\end{proof}

\begin{proof}[Proof of Theorem \ref{thm:nonmimic}]
If $T_{1}$ is a refinement of $T_{2}$, then one checks that $\calm_{\cals}^{+} \subset \calm_{\calt}$, by choosing the mixing weights as a standard unit vector, and setting edge lengths equal to zero on the edges appearing in $T_{1}$ but not $T_{2}$.  

So assume that  $T_{1}$ is not a refinement of $T_{2}$, yet $\calm_{\calt} \cap \calm_{\cals}^{+}$ is non-empty.  We may also assume that $T_{1}$ is a binary tree, by passing to a refinement,  as this only enlarges the mixture model.    
There exists a subset $K$ of four taxa such that the induced quartet trees $T_{1}|_{K}$ and $T_{2}|_{K}$ are different.  Marginalizing to $K$, since   $(\calm_{\calt})|_{K} =  \calm_{(\calt|_{K})}$ and $(\calm_{\cals})|_{K} =  \calm_{(\cals|_{K})}$, we have that $\calm_{(\calt|_{K})} \cap \calm_{(\cals|_{K})}^{+}$ is non-empty.

Now, by Lemma \ref{lem:positive} the transition matrices that arise in the resulting quartet trees will be products of nonsingular matrices
that either are the identity, or have all positive entries.  Thus each quartet tree transition matrix is nonsingular and can have zero entries if, and only if, it is the product of identity matrices. We now apply Lemma \ref{lem:quartet} to deduce that all the edge lengths along the internal edge of $T_2|_{K}$ must be zero.  But this contradicts the fact that we were working with the open model $\calm^{+}_{\cals}$.
\end{proof}

\medskip

To prove Theorem \ref{thm:kimura}, we recall a number of results about the JC and K2P models, including their descriptions in Fourier coordinates, and properties of linear invariants/tests for these models.

The JC, K2P (and K3P) models are group-based models, with a special structure governed by the finite abelian group $G=\zz_{2} \times \zz_{2}$.  We associate  nucleotides
with elements of this group via
$$
A = (0,0), \ C = (0,1),\  G = (1,0),\  T = (1,1).
$$
The discrete Fourier transform (also called 
Hadamard conjugation in this context) \citep{Hendy1989,Evans1993} is an invertible linear transformation that simplifies the parameterization of a group-based model.  In Fourier coordinates, $q_{g_{1} \ldots g_{n}}$, the parameterization is described as follows:  To each of the tree $T$'s splits  $A|B$
we associate a collection of parameters $a_{g}^{A|B}$ where $g \in G$.  Then
\begin{equation}\label{eq:Fparam}
q_{g_{1} \ldots g_{n}} =  \left\{
\begin{array}{cl}
\prod_{A|B } a^{A|B}_{\sum_{i \in A} g_{i}} & \mbox{ if }
\sum g_{i} =0 \\
0 & \mbox{otherwise.}
\end{array} \right. 
\end{equation}

\begin{prop}\label{prop:matsenineq}
Suppose that a transition matrix has the form $exp(Qt)$ where $Q$ is  a rate matrix for a $\zz_{2} \times \zz_{2}$ group-based model, $t > 0$, and $Q$ defines an irreducible Markov chain.  Then the Fourier parameters satisfy the constraints:
\begin{align*}
a^{A|B}_{A} &=1,\\
a^{A|B}_{C} & \geq  a^{A|B}_{G} a^{A|B}_{T},\\
a^{A|B}_{G} & \geq  a^{A|B}_{C} a^{A|B}_{T},\\
a^{A|B}_{T} & \geq  a^{A|B}_{C} a^{A|B}_{G},
\end{align*}
with $a^{A|B}_{C}, a^{A|B}_{G}, a^{A|B}_{T} \in (0,1)$. When $t = 0$, all parameters equal $1$.

Additionally,  under the K2P model, $a^{A|B}_{G} = a^{A|B}_{T}$, and under the JC model $a^{A|B}_{C} = a^{A|B}_{G} = a^{A|B}_{T}$.
\end{prop}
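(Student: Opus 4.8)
The plan is to exploit the fact that every transition matrix $M = \exp(Qt)$ of a $\zz_2\times\zz_2$ group-based model is \emph{group-circulant}: its entry $M_{gh}$ depends only on $g+h$. Such matrices are simultaneously diagonalized by the characters of $G=\zz_2\times\zz_2$, and the Fourier parameters $a^{A|B}_g$ are precisely the associated eigenvalues. Writing $\chi_g(h)=(-1)^{\langle g,h\rangle}$ for the character indexed by $g$, one has $a^{A|B}_g=\sum_{h\in G} M_{0h}\,\chi_g(h)$, the eigenvalue of $M$ on the $\chi_g$-eigenspace. Since $M=\exp(Qt)$, this equals $\exp(\lambda_g t)$, where $\lambda_g=\sum_{h\in G} q_h\,\chi_g(h)$ is the matching eigenvalue of $Q$, $q_h=Q_{0h}$ denotes its entries, and the off-diagonal rates $q_C,q_G,q_T$ are nonnegative while the zero row-sum condition forces $q_A=-(q_C+q_G+q_T)$.

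First I would record the two easy cases. The zero row-sum condition gives $\lambda_A=\sum_h q_h=0$, so $a^{A|B}_A=e^{0}=1$; and when $t=0$ we have $M=I$, whose eigenvalues are all $1$. Next, using the character table of $\zz_2\times\zz_2$, a direct computation gives $\lambda_C=-2(q_C+q_T)$, $\lambda_G=-2(q_G+q_T)$, and $\lambda_T=-2(q_C+q_G)$, so each of $a^{A|B}_C,a^{A|B}_G,a^{A|B}_T$ is the exponential of a nonpositive quantity and therefore lies in $(0,1]$. The three multiplicative inequalities then reduce to linear ones: for instance $a_C\geq a_G a_T$ is, after taking logarithms and dividing by $t>0$, equivalent to $\lambda_C\geq\lambda_G+\lambda_T$, i.e.\ $-2(q_C+q_T)\geq -2(q_C+2q_G+q_T)$, which simplifies to $q_G\geq 0$; the remaining two inequalities are symmetric and reduce to $q_C\geq 0$ and $q_T\geq 0$. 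Thus all three inequalities hold by nonnegativity of the rates alone. The model specializations follow by inspection: K2P imposes equal transversion rates, which in these coordinates reads $q_C=q_T$, forcing $\lambda_G=\lambda_T$ and hence $a_G=a_T$; JC imposes $q_C=q_G=q_T$, making all three $\lambda$'s equal and so $a_C=a_G=a_T$.

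The only step requiring genuine use of irreducibility — and where I expect the real subtlety — is the strict bound $a_C,a_G,a_T\in(0,1)$, i.e.\ strict negativity of each $\lambda$. Since $a_C<1$ exactly when $q_C+q_T>0$, and similarly for the others, I would argue that irreducibility forces each pairwise sum $q_C+q_T$, $q_G+q_T$, $q_C+q_G$ to be strictly positive. The key point is that the nonzero off-diagonal rates index the edges of the state-transition graph $G_Q$; if at most one of $q_C,q_G,q_T$ were positive, those edges would stay inside a proper order-two subgroup of $\zz_2\times\zz_2$ (one of $\{0,C\}$, $\{0,G\}$, $\{0,T\}$), so $G_Q$ could not be strongly connected. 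Hence irreducibility forces at least two of the rates positive, whence every pairwise sum is strictly positive and every $a_g$ strictly below $1$. Pinning down this subgroup-generation statement carefully is the main obstacle, since it is precisely the place where the discrete combinatorial content of irreducibility must be matched to the desired strict analytic inequalities.
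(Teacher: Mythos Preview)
Your proof is correct and is essentially the same argument as the paper's: the paper writes out the explicit K3P rate matrix with off-diagonal rates $\alpha,\beta,\gamma$ and diagonalizes it by the Hadamard matrix (which is exactly your character table), obtaining the same eigenvalues $-2(\alpha+\gamma),-2(\beta+\gamma),-2(\alpha+\beta)$ and deriving the inequalities and submodel equalities just as you do. The irreducibility step you flag as the main obstacle is in fact dispatched in the paper in one line---``at most one of $\alpha,\beta,\gamma$ can be zero''---which is equivalent to your subgroup-generation observation that at least two of the rates must be positive.
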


\begin{proof}
Let $Q$ be a rate matrix of K3P format and $H$ the associated $4 \times 4$ Hadamard
matrix, that is, for some $\alpha,\beta,\gamma \ge 0$, $\delta=-\alpha-\beta-\gamma$,
$$
Q = \begin{bmatrix}
\delta& \alpha & \beta & \gamma \\
\alpha & \delta & \gamma &  \beta  \\
 \beta & \gamma  & \delta & \alpha \\
 \gamma &  \beta  &\alpha & \delta  \\
\end{bmatrix} 
\quad \mbox{ and } \quad 
H  = 
\begin{bmatrix}
1 & \phantom{-}1 & \phantom{-}1 & \phantom{-}1 \\
1 & -1 & \phantom{-}1 & -1 \\
1 & \phantom{-}1 & -1 & -1 \\
1 & -1 & -1 & \phantom{-}1
\end{bmatrix}. 
$$

\phantom{-}
The Fourier coordinates for this model consist of the eigenvalues of the
matrix $\exp(Qt)$.  The matrix $H$ consists of the eigenvectors of the matrix
$Q$, and hence of $\exp(Qt)$.  We compute that $H^{-1}QH$ is the diagonal matrix
$\diag(0, -2 (\alpha +\gamma), -2(\beta +\gamma), -2(\alpha + \beta))$.
From this we deduce that the Fourier coordinates for this model are then
$$
a^{A|B}_{A} = 1,\   a^{A|B}_{C} = \exp(-2 t (\alpha +\gamma)),\  
a^{A|B}_{G} = \exp(-2 t (\beta +\gamma)),\  a^{A|B}_{T} = \exp(-2 t (\alpha +\beta)).
$$
Since $Q$ gives an irreducible Markov chain, at most one of $\alpha, \beta,$
and $\gamma$ can be zero, which implies that all of $a^{A|B}_{C}, a^{A|B}_{G}, a^{A|B}_{T} < 1$
when $t > 0$.
Furthermore, we see that the claimed inequalities hold, \emph{e.g.},
$$
a^{A|B}_{C} = \exp(-2 t (\alpha +\gamma)) \geq \exp(-2 t (\alpha + 2\beta +\gamma)) 
= a^{A|B}_{G}a^{A|B}_{T}.
$$ 
Note also that the K2P model consists of all rate matrices where
 $\alpha = \gamma$, which implies that $a^{A|B}_{G} = a^{A|B}_{T}$,
 and the JC models consists of all rate matrices where $\alpha = \beta = \gamma$,
 which implies that $a^{A|B}_{C} =a^{A|B}_{G} = a^{A|B}_{T}$.
\end{proof}

\begin{prop}\label{prop:quartets}
Let $T_{1} = T_{12|34}$, $T_{2} = T_{13|24}$, and $T_{3} = T_{14|23}$. 
Then under the JC and K2P models,  the polynomial
$$
l(q) = q_{GGGG} - q_{GGTT}
$$
satisfies the following properties:
\begin{enumerate}
\item $l(q) = 0$ for all $q \in \calm_{T_{1}}$,
\item $l(q) \geq 0$ for all $q \in \calm_{T_{i}}$, $ i = 2,3$,
\item $l(q) > 0$ for all $q \in \calm_{T_{i}}^{+}$, $i = 2,3$, and 
\item if $q \in \calm_{T_{i}}$, for $i=2$ or $3$, and $l(q) = 0$, then the branch length of the internal edge is zero.
\end{enumerate}
\end{prop}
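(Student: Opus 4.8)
The plan is to carry out the entire argument in the Fourier coordinates of equation~(\ref{eq:Fparam}), where the monomials $q_{GGGG}$ and $q_{GGTT}$ each become a single product of split parameters $a^{A|B}_g$ and the three quartet topologies are distinguished by one group element. First I would observe that, under the identification $G=(1,0)$ and $T=(1,1)$, both index strings $GGGG$ and $GGTT$ sum to $(0,0)$ in $\zz_2\times\zz_2$, so each is a genuine (nonzero) Fourier coordinate rather than one forced to vanish by~(\ref{eq:Fparam}). For $T_1 = T_{12|34}$ the internal split is $\{1,2\}|\{3,4\}$, and in both patterns the group element attached to that edge is $g_1+g_2 = G+G = A$, contributing the factor $a_A = 1$; the two monomials then differ only in the pendant factors at leaves $3$ and $4$, each of which carries group element $G$ in $GGGG$ but $T$ in $GGTT$. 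Invoking the equality $a_G = a_T$ guaranteed for JC and K2P by Proposition~\ref{prop:matsenineq}, these coincide, so $q_{GGGG}=q_{GGTT}$ and property~(1) follows.

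For $T_2 = T_{13|24}$, and symmetrically for $T_3 = T_{14|23}$, the pendant factors again cancel under $a_G = a_T$, but the internal edge now behaves differently: for $GGGG$ its group element is $g_1+g_3 = G+G = A$, giving the factor $1$, whereas for $GGTT$ it is $g_1+g_3 = G+T = C$, giving a factor $a^{\mathrm{int}}_C$. This yields the factorization
$$l(q) = \Big(\prod_{e \text{ pendant}} a^{e}_G\Big)\big(1 - a^{\mathrm{int}}_C\big),$$
in which the product over the four pendant edges is strictly positive because each factor lies in $(0,1]$. Properties~(2)--(4) are then immediate from the sign of $1-a^{\mathrm{int}}_C$: by Proposition~\ref{prop:matsenineq} one has $a^{\mathrm{int}}_C \in (0,1]$, so $l(q)\ge 0$, giving~(2); on the open model $\calm^{+}_{T_i}$ the internal branch has positive length and $a^{\mathrm{int}}_C<1$, giving $l(q)>0$ and hence~(3); and $l(q)=0$ forces $a^{\mathrm{int}}_C = 1$, which by the same proposition occurs precisely when the internal branch length vanishes, giving~(4).

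There is no deep obstacle here, since the whole proof is a bookkeeping computation in Fourier coordinates; the only point demanding genuine care is the group arithmetic separating the three quartets, namely that the internal-edge element is $G+G=A$ for $T_1$ but $G+T=C$ for $T_2$ and $T_3$, which is exactly what makes the invariant vanish identically on $T_1$ yet be strictly positive (off the star degeneration) on the other two topologies. I would add one remark to cover the general model $\calm_{T_i}$: when extra degree-$2$ nodes are introduced, each single-edge parameter is replaced by a product of such parameters, which still lies in $(0,1]$ and equals $1$ if and only if every constituent branch length is zero; hence the factorization and the characterization ``$a^{\mathrm{int}}_C=1$ iff the internal edge has length zero'' persist, and (1)--(4) hold verbatim.
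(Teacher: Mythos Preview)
Your proof is correct and follows essentially the same route as the paper's: substitute the Fourier parameterization~(\ref{eq:Fparam}) into $q_{GGGG}-q_{GGTT}$, use $a_G=a_T$ from Proposition~\ref{prop:matsenineq} to cancel the pendant factors, and read off the factorization $\big(\prod a^e_G\big)(1-a^{\mathrm{int}}_C)$ on $T_2,T_3$. Your explicit check that both index strings sum to $A$ and your closing remark on degree-$2$ nodes are useful clarifications that the paper leaves implicit.
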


\begin{proof}
To evaluate the polynomial $l(q)$, we substitute for $q$ the parametric expressions given in equation \eqref{eq:Fparam}.  
Denoting parameters for trivial splits $\{i\}|(\{1,2,3,4\}\smallsetminus \{i\})$ by  $a^i_g$, for  $q\in \calm_{T_1}$ we have
$$
l(q) =  q_{GGGG} - q_{GGTT} = a^{1}_{G}a^{2}_{G}a^{3}_{G}a^{4}_{G}a^{12|34}_{A} - 
a^{1}_{G}a^{2}_{G}a^{3}_{T}a^{4}_{T}a^{12|34}_{A}.
$$
Since $a^{A|B}_{G} = a^{A|B}_{T}$ in the JC and K2P models, the first claim follows.

If $q \in \calm_{T_{2}}$, to establish the remaining claims note
$$
l(q) =  q_{GGGG} - q_{GGTT} = a^{1}_{G}a^{2}_{G}a^{3}_{G}a^{4}_{G}a^{13|24}_{A} - 
a^{1}_{G}a^{2}_{G}a^{3}_{T}a^{4}_{T}a^{13|24}_{C}.
$$
Since $a^{A|B}_{G} = a^{A|B}_{T}$for the JC and K2P models, and $a^{13|24}_{A} =1$,  this expression factors as
$$
l(q) =a^{1}_{G}a^{2}_{G}a^{3}_{G}a^{4}_{G}(1 - a^{13|24}_{C}).
$$
By Proposition \ref{prop:matsenineq} all $a^{A|B}_{g} \in (0,1]$ , so $l(q)\geq 0$.  Moreover,
if all branch lengths are strictly positive, so is $l(q)$.
On the other hand, the only way this expression can  equal  zero with 
$q \in \calm_{T_2}$ is if $a^{13|24}_{C} = 1$.  But then Proposition
\ref{prop:matsenineq} implies  the length of the internal branch is zero.  

Similar arguments show the claims for $T_{3}$. 
\end{proof}

\begin{proof}[Proof of Theorem \ref{thm:kimura}]
Let $K$ be any four element subset of the taxa.  If 
$\calm_{\calt} \cap \calm_{\cals}^{+} \neq \emptyset$,
then when we marginalize to mixture models on the leaf set $K$ the corresponding intersection is also non-empty.  Since the claims of the theorem concern quartets, it suffices to restrict attention to the case of $n=4$ taxa.  

First suppose that the tree $T_{1}$ is fully-resolved.  By symmetry we may assume it is $T_{12|34}$.  
By Proposition \ref{prop:quartets}, 
$l(q)  =0$  if $q\in \calm_{T_{12|34}}$, while $l(q)
>0$ if $q\in \calm^+_{T_{13|24}}$ or $\calm^+_{T_{14|23}}$. By the linearity of $l$, this implies
$l(q)=0$ if $q\in \calm_{\cals}$, while $l(q)>0$ for $q\in\calm_{\calt}^{++}$ provided $\calt$ contains at least one of the resolved trees
$T_{13|24}$ or $T_{14|23}$.
 This implies that if $q \in \calm_{\cals} \cap \calm_{\calt}^{++}$, then no quartet incompatible with tree $T_{1}$ can appear among the trees of $\calt$.

If $T_{1}$ is the star tree, then from each of its three resolutions we obtain inequalities analogous to those for $l(q)$.  These imply that $\calt$ can only contain star trees.

\smallskip

Finally, in the case that all $T\in \calt$ are binary and $T_1\notin\calt$, if $q\in \calm_\cals \cap\calm_\calt^+$ then by replacing $\calt$ by a subset $\calt'$ we have $q\in \calm_\cals\cap\calm_{\calt'}^{++}$. From the argument above it follows that for all $T\in \calt'$ and all quartets $K$,
$T|_K=T_1|_K$. Thus we obtain the contradiction that $T=T_1$, and conclude no such $q$ exists.
\end{proof}

\medskip

\begin{proof}[Proof of Theorem \ref{thm:contract}]
We first consider the case that $T$ is
a 3-leaf  tree, and $T_{1}$ and $T_{2}$ are two of its contractions where one leaf has become an internal vertex.

The model on a 3-leaf tree under the JC model has precisely
$3$ nontrivial Fourier parameters, one per edge.  We set the parameterization of that
model, with edge parameters $a,b,c\in (0,1]$,  equal to the one for the mixture on $T_{1}$ and $T_{2}$, with edge parameters $d,e$ and $f,g$ respectively, and mixing parameter $\pi$.
This gives us, for fixed $a,b,c$, the following system of $4$ equations in
$5$ unknowns:
\begin{eqnarray*}
ab & = & (1-\pi)d + \pi f , \\
ac & = & (1-\pi)de + \pi g,  \\
bc & = & (1-\pi)e +  \pi fg,  \\
abc & = & (1-\pi)de + \pi fg.
\end{eqnarray*}
It is not difficult to see that the values
\begin{equation}\label{eq:plugin}
d = 0,\  e = bc,\   f = b,\  g = c,\  \pi = a
\end{equation}
give a solution to this system 
of equations.  
For the open models, however, we seek solutions where $d,e,f,g,\pi \in (0,1)$ for fixed $a,b,c \in (0,1)$.
A computation of the Jacobian of the system of equations at the values in equations \eqref{eq:plugin} 
allows us to apply the implicit function theorem, and treat $d$ as an independent variable in a neighborhood of the above solution.  Hence,
if we perturb $d$ to $d'$, with $0<  d' \ll 1$, we obtain 
parameters in $(0,1)$ solving the system of equations.  
This shows that there is complete mimicking for the open models in the 3-leaf case.

Finally we apply Proposition \ref{prop:fusion2}: Since any trees of the type specified
in the statement of the theorem can be obtained by attaching fusion ends to
the 3-leaf tree and its two degenerations, we deduce the general result.
\end{proof}


\end{document}